 \newcommand{\field}[1]{\mathbb{#1}}
 \newcommand{\R}{\field{R}}
 \newcommand{\Exp}{\field{E}}
 \newcommand{\Pro}{\field{P}}
\newcommand{\be}{\begin{eqnarray}}
\newcommand{\ee}{\end{eqnarray}}
\newcommand{\by}{\begin{eqnarray*}}
\newcommand{\ey}{\end{eqnarray*}}
\newcommand{\bn}{\begin{enumerate}}
\newcommand{\en}{\end{enumerate}}
\newcommand{\bi}{\begin{itemize}}
\newcommand{\ei}{\end{itemize}}
\newtheorem{proposition}{Proposition}[section]
\newtheorem{example}{Example}[section]
\newtheorem{theo}{Theorem}[section]
\newtheorem{pr}{Proposition}[section]
\newtheorem{lem}{Lemma}[section]
\newtheorem{co}{Corollary}[section]
\newtheorem{re}{Remark}[section]
\newtheorem{de}{Definition}[section]
\newtheorem{exa}{Example}[section]
\newcommand{\bex}{\begin{exa}}
\newcommand{\eex}{\end{exa}}
\newcommand{\bt}{\begin{theo}}
\newcommand{\et}{\end{theo}}
\newcommand{\bp}{\begin{pr}}
\newcommand{\ep}{\end{pr}}
\newcommand{\bl}{\begin{lem}}
\newcommand{\el}{\end{lem}}
\newcommand{\bc}{\begin{co}}
\newcommand{\ec}{\end{co}}
\newcommand{\br}{\begin{re}}
\newcommand{\er}{\end{re}}
\newcommand{\bd}{\begin{de}}
\newcommand{\ed}{\end{de}}
\numberwithin{equation}{section}
\newcommand{\nb}{\nonumber}
\newbox\tmp
\newdimen\height
\newdimen\dropdist
\def\BX#1{\setbox\tmp=\hbox{$\overline{\scriptstyle #1}$}
              \height=\ht\tmp
              \dropdist=\dp\tmp
              \advance\dropdist by .7pt
              \advance\height by \dp\tmp
              \box\tmp
              \lower \dropdist \hbox{\vrule height
              \height width .25pt\relax}
              \ifnum0=`{\else}\fi
}
\def\bx{\expandafter\BX\expandafter{\ifnum0=`}\fi}
\begin{document}

\title{Sample Recycling Method -- A New Approach to Efficient Nested Monte Carlo Simulations}
\author{Runhuan Feng\thanks{Email: rfeng@illinois.edu, Department of Mathematics, University of Illinois at Urbana-Champaign.} , Peng Li\thanks{Email: pengli@nufe.edu.cn, School of Finance, Nanjing University of Finance and Economics.}}
\date{}
\maketitle

\begin{abstract}
Nested stochastic modeling has been on the rise in many fields of the financial industry. Such modeling arises whenever certain components of a stochastic model are stochastically determined by other models. There are at least two main areas of applications, including (1) portfolio risk management in the banking sector and (2) principle-based reserving and capital requirements in the insurance sector. As financial instrument values often change with economic fundamentals, the risk management of a portfolio (outer loop) often requires the assessment of financial positions subject to changes in risk factors in the immediate future. The valuation of financial position (inner loop) is based on projections of cashflows and risk factors into the distant future. The nesting of such stochastic modeling can be computationally challenging.

Most of existing techniques to speed up nested simulations are based on curve fitting. The main idea is to establish a functional relationship between inner loop estimator and risk factors by running a limited set of economic scenarios, and, instead of running inner loop simulations, inner loop estimations are made by feeding other scenarios into the fitted curve. This paper presents a non-conventional approach based on the concept of sample recycling. Its essence is to run inner loop estimation for a small set of outer loop scenarios and to find inner loop estimates under other outer loop scenarios by recycling those known inner loop paths. This new approach can be much more efficient when traditional techniques are difficult to implement in practice.

\bigskip

{\bf Key Words.}  Nested simulation; risk estimation;
change of measure;
density-ratio estimation; sample recycling method.

\end{abstract}
\maketitle
\newpage
\section{Introduction}

Many problems in portfolio risk measurement and financial reporting require nested stochastic modeling. Standard nested Monte Carlo methods can be costly and time consuming to reach a reasonable degree of accuracy. There has been growing demand in the financial industry for methods to speed up the nested simulation procedure.

 In portfolio risk management, nested simulations are applied in a wide variety of risk assessments. Current use of Monte Carlo simulations are typically divided into two stages: outer loops and inner loops. In outer loops, Monte Carlo simulations are performed on all relevant risk factors over a specific risk horizon; the objective is often to calculate some risk measure of a portfolio consisting of multiple financial instruments. In inner loops, those financial instruments are evaluated conditional on risk factors generated from outer scenarios. As mentioned earlier, standard nested Monte Carlo simulations impose heavy computational burden. To tackle this problem, \citet*{gordy2010nested} analyzed the optimal allocation of computational resources between the inner and the outer stage. By minimizing the mean square error of the resultant estimator, they estimated multiple portfolio risk measures such as probability of large losses, Value-at-Risk(VaR), and expected shortfall. Moreover, \citet*{lan2010confidence} constructed confidence intervals based on statistical theory of empirical likelihood and ranking-and-selection method. 
 \citet*{broadie2011efficient} developed a sequential allocation method in the inner stage based on marginal changes of the risk estimator in each scenario. Following their earlier work, \citet*{broadie2015risk} introduced the least square Monte Carlo in the inner stage to estimate the portfolio risk, and \citet*{Hong2017} expanded on the Nadaraya-Watson kernel smoothing method in the inner stage. Recently, \citet*{Giles2019} used the multilevel Monte Carlo method in the nested simulation of risk estimation.

Nested simulations are also commonly used in the insurance literature when financial reporting procedures, such as reserving and capital requirement calculation, are performed under various stochastically determined economic scenarios. \citet*{reynolds2008nested} pointed out that the need of nested stochastic is driven by a number of changes in the regulatory and accounting world and explain the move from stochastic to nested stochastic by a few examples under various accounting standards. A review of various circumstances under which nested simulation arises in financial reporting can be found in \citet*{feng2016nested}. Standard nested Monte Carlo simulations were studied under different accounting requirements, such as the Solvency Capital Requirement(SCR) in Solvency II \citep*{Morgan2006prep,bauer2012on}, reserve and capital with a principle-based approach \citep*{Reynolds2008b}, and the dynamic hedging under Actuarial Guideline (AG) 43 \citep*{feng2016nested}. In the context of AG-43, \citet*{Lifen} replaced the inner stage simulation with PDE numerical approximation in the dynamic hedging. Additionally, universal kriging method and machine learning method improved the efficiency in the stochastic pricing of a large variable annuity portfolio \citep*{gan2013application,gan2015valuation, gan2017efficent}. Most recently, a neural network approach has been used in the SCR of a large portfolio of variable annuity \citep*{Hejazi2017}. A surrogate modeling approach is developed by \citep*{lin_yang} where the functional relationship between input and output of VA valuation models can be approximated by various statistical models. The work is further extended for dynamic hedging of variable annuity portfolio in \citep*{lin_yang_2020}.

All the existing methods to speed up nested simulations can be summarized in three categories: (1) optimal allocation of resources between outer and inner loops \citep*{gordy2010nested,lan2010confidence,broadie2011efficient,Giles2019}, (2) reduction of inner loops through approximation techniques \citep*{broadie2015risk,feng2016nested}, and (3) volume reduction of nested simulation \citep*{Hejazi2017, gan2013application,gan2015valuation}. The second category is more common used in financial reporting due to the ease of implementation. Note that there exist many other fitting methods in the inner stage of the nested simulations, such as the exponential fitting technique \citep*{beylkin2005on} adapted for actuarial applications in \citep*{feng2017analytical}, the multivariate interpolation techniques \citep*{hardy2003investment}, and polynomial approximation.

The new technique proposed in this paper is based on an entirely different strategy. The basic idea is to reduce the number of inner loops by recycling a small set of them for different inner loop estimators. Hence we call this new method sample recycling method (SRM). In contrast with existing methods in the second category, this method completely avoids approximating functional relationship between inner loop estimator and risk factors. Once inner loop paths are generated for an inner loop estimator at some reference point (in state space), we reuse them to compute the estimators at other target points. Estimation with recycled samples requires the distorted weight (density-ratio) based on change of measures. In most well-known Markov models, we can calculate analytical expressions of distorted weights. In general case, one can estimate these distorted weights by non-parametric methods. Through a variety of examples, we will demonstrate the efficiency and applications of both parametric and non-parametric SRMs. 

It was recently brought to our attention that a similar concept to sample recycling was developed in an independent work by \citet*{fengming2017}, which is called the Green simulation method. Their work promotes reusing the output from previous simulation experiments to answer new questions based on simulations. The work of \citet*{fengming2017} and this paper differ in the problem set-up and implementation details. Their work focuses on general stochastic models, whereas this paper frames sample recycling methods in the context of nested stochastic modeling. The Green simulation method uses sample from all previous experiments and do not necessarily use particular sample sets. A mixture likelihood ratio estimator based on samples of all previous experiments is used to estimate quantities with a new input. Hence, in their setting, it is less of an issue to choose appropriate reference samples. In the context of nested simulation, we assume a pre-processed set of sample points. The aim of this paper is to reduce the number of inner loop simulations in a nested stochastic model. Hence the strategy of the sample recycling method is to identify a set of reference outer loop scenarios from which inner loop samples are obtained and to recycle them for the purpose of estimating quantities for other (target) outer loop scenarios. We propose a block method to ensure that sufficient and relevant sample paths are collected to improve the accuracy and efficiency of inner loop estimations. In this method, one reference point is chosen for each block, which effectively control the difference of distributions under reference scenarios and target scenarios. The mixture likelihood ratio method is further studied and extended in the context of tail event estimation in \citet*{dang}. 

The rest of the paper is organized as follows. Section \ref{sec:nested_MC} provides a brief introduction to the standard nested Monte Carlo simulation.
Section \ref{sec:srm} describes the proposed \textit{sample recycling method}, estimator, accuracy, and computational efforts. 
To further illustrate this method, it gives some examples to explain the calculations of inner loops and the estimation of risk measure. 
Section \ref{sec:DR} continues to expand on the sample recycling framework by discussing a data-driven (non-parametric) likelihood estimation method. In both methods, numerical examples are given to compare with the standard nested Monte Carlo simulation and nested simulation via regression. 
Details of mathematical derivations and experiments are presented in the Appendix.

\begin{figure}[h]
\centering
 \includegraphics[width=\textwidth]{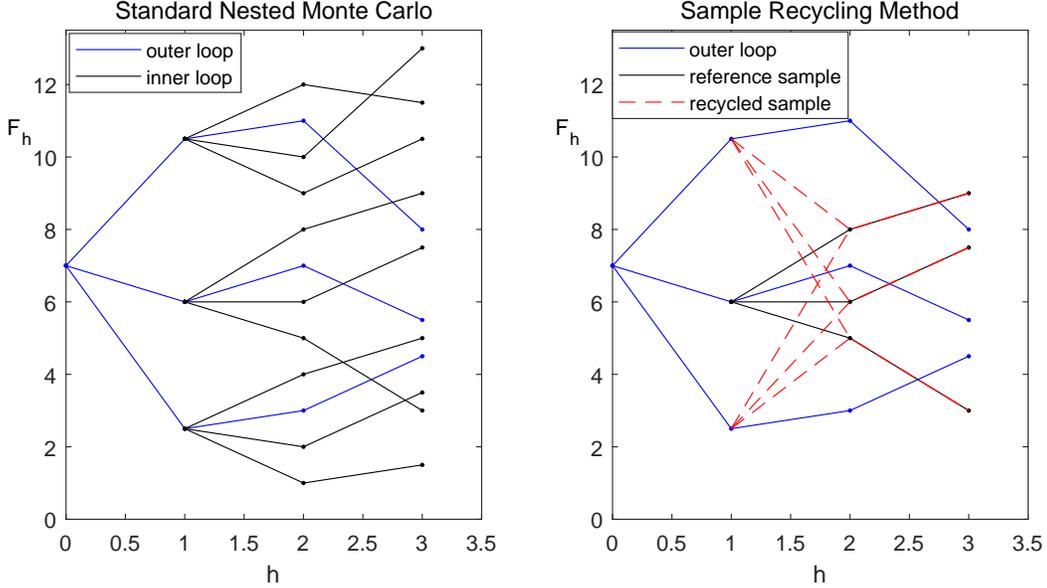}
\caption{Comparison of standard nested MC and sample recycling methods}
\label{fig:srsn}
\end{figure}

\section{Standard nested Monte Carlo method} \label{sec:nested_MC}

In a typical setting of nested simulations, we are interested in the risk measure of a portfolio's loss or gain at some future time $\tau$. This value depends on the evolution of various financial risk factors over the period $[0,\tau]$. Common risk factors may include but are not limited to short-term yield rates, long-term yield rates, equity values, equity volatilities, exchanges rates, etc. Let $\Omega$ be a set of all possible sample paths for risk factors, $\Pro$ be the physical measure under which data are observable in financial markets, $\mathbb{Q}$ be the risk-neutral measure for market consistent valuation. 
Typically, all valuations on portfolio risk management are done under risk-neutral measures. In insurance applications, however, risk measure may be considered under physical measure in financial reporting. As far as the methodology itself is concerned, it does not matter under which measure the application is performed.

\medskip

\noindent {\bf Outer loop estimation}

\medskip

For example, we may consider the risk measure for the valuation of a portfolio
\be \rho=\mathbb{E}[f(L)], \label{riskmeas}\ee
where $L$ is the future loss of the portfolio over the period $[0, \tau]$, and $f$ is a real-valued function such that the expectation exists. Examples of such risk measure may include the probability of a large loss where $f(x)=\mathbf{1}(x\ge c)$, the expected excess loss where $f(x)=(x-c)_+$, and the present value of loss where $f(x)=e^{-r \tau}x$.

In other applications, one may be interested in risk measures such as the Value-at-Risk
\[\mathrm{VaR}_\alpha[f(L)]=\inf\{x\in \R| \Pro(f(L)\le x)>\alpha\},\] or the conditional tail expectation
\[\mathrm{CTE}_\alpha[f(L)]=\Exp\big[f(L)| f(L)>\mathrm{VaR}_\alpha[f(L)]\big],\]
neither of which conforms to the form in \eqref{riskmeas}, which we will focus on for analysis. Nevertheless, it is worthwhile to point out that this sample recycling technique is not restricted to the exact form of \eqref{riskmeas} and can be extended to other risk measures.

\medskip

\noindent {\bf Inner loop estimation}

\medskip

In practice, the computational challenge arises as neither the risk measure in \eqref{riskmeas}  nor the loss random variable $L$ is explicitly expressed by an algebraic formula. Instead, the quantity is estimated in two steps, i.e., a ``nested" setting. The outer layer of the simulation approximates the distribution of the loss $L$ by its empirical distribution as a result of Monte Carlo sampling. In particular, the risk measure can be estimated by the standard statistic given independent and identically distributed samples 
 based on the physical measure,
\be \label{outer}
\frac{1}{n}\sum_{i=1}^{n}f(L_i),
\ee
where $(L_1,L_2,\cdots,L_n)$ is an i.i.d sample of random variable $L$.

Note that the portfolio loss $L_i$ in $i$-th scenario is difficult to compute, as it is usually dependent on paths and cashflows over the period $[\tau, T]$ where $T$ is the specified maturity time. The purpose of inner level simulation is exactly to avoid this difficulty. In practice,  the portfolio loss $L$ is often viewed as a conditional expectation on the information of the risk horizon $[0,\tau]$. Let $\mathcal{F}_{\tau}$ be a field that contains all the information  available to investors at time $\tau$. This conditional expectation can be written as
\be
L=\mathbb{E}^{\mathbb{Q}}[g(Z)|\mathcal{F}_{\tau}],\label{inner}
\ee
where $Z$ 
is a random element of $\mathfrak{R}^d$ describing the performance of portfolio on $[\tau, T]$, and $g(\cdot)$ is a known function from $\mathfrak{R}^d$ to $\mathfrak{R}$.\footnote{Note that $Z$ is defined for simplicity; in general, it can also contain path-dependent situations, for example, the average underlying price in Asian options.} 
To obtain the sample $(L_1,L_2,\ldots,L_n)$, we typically obtain from each outer loop simulation values of the underlying risk factors and generate inner loop sample paths over the period $[\tau, T]$ under the risk-neutral measure. 
Let us denote by $(Z_{i,1}, Z_{i,2},..., Z_{i,m})$ an independent and identically distributed sample of cash flows corresponding to the risk factors for $L_i$. One can think of $\{Z_{i,j}, j=1, \cdots, m \}$ for each fixed $i=1, \cdots, n$ as a set of inner loop paths that emanate from the same initial position determined by the $i$-th outer loop scenario. See the sets of black lines in Figure \ref{fig:srsn}(a) as examples. Then, we can approximate the loss $L_i$ under the $i$-th outer loop scenario by
\be \label{inner2}
\widehat{L}_i:=\frac{1}{m}\sum_{j=1}^{m}g(Z_{i,j}),
\ee
\medskip

\noindent {\bf Standard nested MC estimator}

\medskip

Returning to the outer loop,  the risk measure $\rho$ can be estimated by
 \be \widehat{\rho}_{\text{SN}}=\frac{1}{n}\sum_{i=1}^{n}f(\widehat{L}_i). \label{est_sn_out}
 \ee


As mentioned earlier, there are two issues with the standard nested simulation: computation and accuracy. Previous studies present many methods to accelerate nested simulations, which can be summarized in three categories. (1) {\it Optimal allocation of computation between outer and inner levels.} Such methods are dedicated to decision-making on the number of outer and inner loops given a fixed budget. It is shown that risk estimators with optimal allocation of computational resources presenrs a faster convergence order compared to the uniform allocation schemes \citep*{gordy2010nested,lan2010confidence,broadie2011efficient,Giles2019}. (2) {\it Reduction of inner levels through curve fitting techniques}. The main principle is to find replace the mapping between inner loop estimators and outer loop risk factors. Since the inner-level calculation brings most computational challenge, these methods focus on the approximation of inner loop estimates the proxy functional relationship. A relatively small set of sample is used to estimate the proxy function, which is then used to produce values of inner loop estimator under a wide range of outer loop scenarios. \citep*{broadie2015risk,feng2016nested}. (3) {\it Reduction of the the volume of nested simulation}. The central idea of this category is to strike a balance between computational efficiency and model granularity.  \citep*{gan2013application,gan2015valuation,Hejazi2017}.

 \section{Sample Recycling Method} \label{sec:srm}

Here we introduce a new technique that belongs to the second category: reduction of inner levels. However, the proposed method aims to reduce the number of inner loop simulations based on an entirely different philosophy from curve fitting techniques, such as least square Monte Carlo or pre-processed inner loops. This approach avoids redundant computations in the inner loops by re-sampling a few sets of inner loop paths.

\medskip

\noindent {\bf Inner loop estimation}

\medskip
Bear in mind that the outer loop procedure is kept the same as \eqref{outer} and the proposed method differs from the standard nested simulation and other methods in the inner loop estimation. To consider the new estimator, we typically generate inner loop paths to estimate the loss under a particular scenario. The initial position of risk factors under the particular outer loop scenario is referred to as the {\it reference point}. See the initial position from which the middle set of black lines is generated in Figure \ref{fig:srsn}(b) as an example of the reference point.  Without loss of generality, we consider the reference point to be generated under the $1$-st outer loop scenario.  Recall that the inner loop estimation is carried out for the loss random variable
{\[L_1=\mathbb{E}^{\mathbb{Q}_1}[g(Z)],\]} where $\mathbb{Q}_1$ is the measure under which inner loop sample paths are generated from some initial position determined by the $1$-st outer loop scenario. Then we can determine the inner loop estimator under the $1$-st scenario by
\be \label{bench}
\widehat{L}_1=\frac{1}{m}\sum_{j=1}^{m}g(Z_{1,j}),
\ee
where $(Z_{1,1}, Z_{1,2},\ldots,Z_{1,m})$ are i.i.d samples generated for the random element $Z$ conditioned on the $1$-st outer loop scenario (under measure $\mathbb{Q}_1$). Note that this estimator is the same as the one for standard nested MC method \eqref{inner2}.
 
For simplicity, the $1$-st outer loop scenario is referred to as a reference point and other scenarios as target points. We intend to reuse the inner paths$(Z_{1,1}, Z_{1,2},\ldots,Z_{1,m})$  and evaluations $g(Z_{1,1}), g(Z_{1,2}),...,g(Z_{1,m})$ for the reference point to estimate loss $L_i$ for other target points $i>1$. Denote by $\mathbb{Q}_i$ the probability measure under which the underlying process starts from the  $i$-st outer loop scenario at time $\tau$. The loss for any target point $i$ can be written as
\be
L_i=\mathbb{E}^{\mathbb{Q}_i}[g(Z)]=\mathbb{E}^{\mathbb{Q}_1}[p_{i|1}(Z)g(Z)], \label{qi}
\ee
where $p_{i|1}(\cdot)$ is the Radon-Nikodym derivative of measure $\mathbb{Q}_i$ with respect to $\mathbb{Q}_1$. If the random element $Z$ 
has conditional probability density $p_i(\cdot)$ under $\mathbb{Q}_i$, then the Radon-Nikodym derivative can be given by 
\be p_{i|1}(\cdot)=\frac{d\mathbb{Q}_i}{d\mathbb{Q}_1
}=\frac{p_i(\cdot)}{p_1(\cdot)}.\label{p_rat}\ee
The sample version of the portfolio loss \eqref{qi} can be written as
 \be \label{inner1}
\widetilde{L}_i:=\frac{1}{m}\sum_{j=1}^{m}p_{i|1}(Z_{1,j}) g(Z_{1,j}).
\ee
Under the original measure $\mathbb{Q}_1$ each inner loop sample path $Z_{1,j}$ carries equal weight $1/m$ in \eqref{bench}. In contrast, the evaluation of each inner loop sample path under the measure $\mathbb{Q}_i$ is given a ``distorted" weight in \eqref{inner1}. In general,  we can interpret the weights in the following way. If the recycled path deviates far from the target point, the Radon-Nikodym derivative $p_{i|1}$ gives a small weight, as it is unlikely to observe such a path eminating from the target point. If the recycled path is close to the target point, the derivartive $p_{i|1}$ offers a large weight to reflect its high likelihood.

\medskip

\noindent {\bf Sample recycling estimator }

\medskip
Then the estimation of risk measure $\rho$ by the sample recycling method is given by
\be \label{est_sr_ou}
 \widetilde{\rho}_{\text{SR}}=\frac{1}{n}\sum_{i=1}^{n}f(\widetilde{L}_i).
 \ee

A quick comparison of \eqref{inner2} and \eqref{inner1} shows their differences. Observe that in \eqref{inner2} each estimator under scenario $i$ uses a new sample $(Z_{i,1}, Z_{i,2}, \cdots, Z_{i,m})$, whereas in \eqref{inner1} estimators for all $i=1, \cdots, n$ only use the same sample $(Z_{1,1}, Z_{1,2}, \cdots, Z_{1,m})$. Because all random variables $(Z_{i,1}, Z_{i,2}, \cdots, Z_{i,m})$ are drawn independently under the measure $\mathbb{Q}_1$, all evaluations in \eqref{inner2} are done with equal weight $1/m.$ In contrast, these random variables no longer appear with equal probability under another measure $\mathbb{Q}_i$ for $i>0.$ For this reason, we shall refer to the probability adjustment $p_{i|1}$ as ``distorted" probability.

Now the question is shifted to evaluating the distorted weight $p_{i|1}(\cdot)$. In the discussion above, we assumed for simplicity that $Z$ is $\mathfrak{R}^d$-valued, but the ideas extend to $Z$ taking values in more general sets. Also, we have assumed that $Z$ conditioned on $\omega_i$ has a conditional probability density $p_i(\cdot)$ under $\mathbb{Q}_i$, so that the weight  $p_{i|1}(\cdot)$ is the ratio of two density functions of multidimensional random variable. The following subsection gives a simplified method to determine the weights under Markov models.

\subsection{Distorted weights}\label{sec_weight}

For portfolio management, it is natural to think of $Z$ as the price of underlying assets.
To illustrate the calculation on the distorted weights $p_{i|1}(\cdot)$, we only consider one risk factor and use a Markov process $\{F_t\}_{t\geq0}$ to represent the price of underlying asset.

We consider the discrete path of $\{F_t\}_{t\geq0}$ on the interval $[0, T]$. For  simplicity, let $F_h:=F_{t_h}, h=0,1,2,...,K$ with $t_0=0$ and $t_K=T$, and the risk horizon $\tau=t_k\in [0,T]$.
In this special example, we denote the asset prices under the $i$-th outer loop scenario by $(F^{(i)}_1,F^{(i)}_2,\cdots,F^{(i)}_k)$ for $i=1, \cdots, n$. Under the $i$-th scenario, we can further generate inner loop sample paths $Z_{i,j}=(F^{(i, j)}_{k+1},F^{(i, j)}_{k+2},\cdots,F^{(i, j)}_K)$ for $j=1, \cdots, m$. Suppose that we use the $1$-st scenario as the reference point. We shall recycle sample paths from the reference point, i.e. $(F^{(1, j)}_{k+1},F^{(1, j)}_{k+2},\cdots,F^{(1, j)}_K)$.


We now consider the sample recycling estimator. Observe that $\mathbb{Q}_i$ is the measure under which $F^{(i)}_k$ is realized, i.e. $\mathbb{Q}_i(F_k=F^{(i)}_k)=1.$ In view of \eqref{qi}, we can obtain that
\be
L_i=\mathbb{E}^{\mathbb{Q}_i}[g(F_{k+1},F_{k+2},...,F_K)]=\mathbb{E}^{\mathbb{Q}_1}\left[p_{i|1}(F_{k+1},F_{k+2},...,F_K)g(F_{k+1},F_{k+2},...,F_K)\right],
\ee
where
\begin{equation}\label{multi}
  p_{i|1}(y_{k+1},y_{k+2},...,y_K)= \frac{p_i(y_{k+1},y_{k+2},...,y_K)}{p_1(y_{k+1},y_{k+2},...,y_K)},
\end{equation}
and $p_i(y_{k+1},y_{k+2},...,y_K)$ is the conditional probability density of $(F_{k+1},F_{k+2},...,F_K)$ under $\mathbb{Q}_i$.
The approximation of this weight has high computational cost because it is a ratio of multidimensional density functions. Note that the process of inner simulation is based on the Markov property, indicating that the inner path simulation $(F^{(i, j)}_{k+1},F^{(i, j)}_{k+2},\cdots,F^{(i, j)}_K)$ for $j=1, \cdots, m$  is conditioned on $F_{k}^{(i)}$. This Markov property can also be used in the simulation of the samples of $F_t, t>t_k$. In other words, we can simulate the path of $(F_{k+1}, F_{k+2},\ldots,F_{K})$ through a recursion, for some function $G$,
\be
F_{h+1}=G(F_{h},X_{h+1}), h\geq k,\label{recur}
\ee
which is driven by i.i.d. risk factors $X_{k+1},X_{k+2},...,X_K$. Then the ``distorted'' weight can be reduced to
\be p_{i|1}(y_{k+1},y_{k+2},...,y_K)=\frac{p_i(y_{k+1})f(y_{k+2},...,y_K|y_{k+1}, F_{k}=x_i )}{p_1(y_{k+1})f(y_{k+2},...,y_K|y_{k+1}, F_{k}=x_1)} \nonumber\ee
where $f(\cdot|\cdot)$ is the conditional density function of $(F_{k+2},...,F_K)$ given $(F_{k+1},F_k)$. 
Thanks to the Markov property, it has $f(y_{k+2},...,y_K|y_{k+1}, F_{k}=x_i)=f(y_{k+2},...,y_K|y_{k+1})$, which has no dependence on $F_k$. Hence, the ``distorted" weight can be simplified to
\be
p_{i|1}(y_{k+1},y_{k+2},...,y_K)=\frac{p_i(y_{k+1})}{p_1(y_{k+1})}. \label{dis_wei}
\ee
Hence, according to \eqref{inner1}, the inner loop estimator for the target point can be written as
 \be\label{sim_sta}
\frac{1}{m}\sum_{j=1}^m\frac{p_i(F_{k+1}^{(1,j)})}{p_1(F_{k+1}^{(1,j)})}g(F_{k+1}^{(1,j)},F_{k+2}^{(1,j)},...,F_{K}^{(1,j)}).
\ee
We give the following three examples to further illustrate the simplified weights.

\begin{example}\label{exa1}
In this example, we assume that the price of underlying asset
 $\{F_t\}_{t\geq 0}$  follows a geometric Brownian motion.  The portfolio only has one underlying asset,
and the asset price at the risk horizon $\tau$ (the outer scenario) is driven by, under the real-world measure $\mathbb{P}$
$$dF_t=\mu F_tdt+\sigma F_tdB_t, \ \ \ \ F_0>0,$$
where $\{B_t\}_{t\geq 0}$ is a standard Brownian motion. The loss of portfolio is evaluated under risk-neutral measure $\mathbb{Q}$, under which the asset price is determined by,
\be dF_t=r F_tdt+\sigma F_tdW_t, \ \ \ \ F_0>0,\label{geome}\ee
where $\{W_t\}_{t\geq 0}$ is a standard Brownian motion under risk-neutral measure $\mathbb{Q}$.

There are $n$ scenarios for the asset price before risk horizon $\tau$. We define the corresponding prices at the risk horizon $\tau$ as $x_1,x_2,...,x_n$ where $x_i:=F_{\tau}(\omega_i)$. In each scenario, we can simulate the path of $(F_{k+1}, F_{k+2},\ldots,F_{K})$ through the following recursion
\be
 F^{(i)}_{h+1}=F^{(i)}_h\exp((r-\sigma^2/2)\Delta t+\sigma \sqrt{\Delta t}X_{h+1}), \ \ \ \ h=k,k+1,\ldots,K,\nonumber
\ee
where $X_1,...,X_K$ are independently draw from standard normal distribution with density function $\phi$. This gives the distribution
$$\ln\left(\frac{F^{(i)}_{h+1}}{F^{(i)}_h}\right)\sim N((r-\frac{\sigma^2}{2})\Delta t, \sigma^2\Delta t). $$
We use $x_1$ as the reference point, then the weights \eqref{dis_wei} can be written as
$$p_{i|1}(y)=\frac{\phi\left(\frac{\ln\left(y/x_i\right)-(r-\sigma^2/2)\Delta t}{\sigma\sqrt{\Delta t}}\right)}{\phi\left(\frac{\ln\left(y/x_1\right)-(r-\sigma^2/2)\Delta t}{\sigma\sqrt{\Delta t}}\right)},$$
and the weights can be simplified as
$ p_{i|1}(y) = A {y}^B,$ where coefficients are given by
\begin{align*}
A & = \exp \bigg( \frac{\ln \big( x_i / x_1 \big)}{\sigma^2 \Delta t} \big( -\frac{1}{2} \ln (x_1 x_i) - (r-\frac{1}{2}\sigma^2)\Delta t \big) \bigg), \\
B & = \frac{\ln \big( x_i / x_1 \big)}{\sigma^2 \Delta t}.
\end{align*}
If we insert parameters $x_i = x_1$, i.e., using oneself as a reference, $p_{i|1}(y)=1$ for any $y$, as expected.
\end{example}

\begin{example}
Suppose we have a portfolio exposed to interest rate risk, and let the rate follow a Vasicek model\citep*{Vasicek1977} under a risk neutral measure,
$$ dF_t = \kappa (\theta - r_t) dt + \sigma dW_t, $$
where constants $\kappa, \theta, \sigma$ denote the speed of reversion, the long-term mean level, and the instantaneous volatility respectively. Here $\{W_t\}_{t>0}$ is a pure Brownian motion under the risk neutral measure. Given the risk horizon $\tau$ and the outer scenarios $x_1,x_2,...,x_n$ where $x_i := F_{\tau}(\omega_i)$,  we can simulate the path of $F_t$ on the interval $[\tau, T)$ in each scenario with the following recursion\citep*{glasserman2003monte}
$$F^{(i)}_{h+1}=e^{-\kappa\Delta t} F^{(i)}_h+\theta(1-e^{-\kappa\Delta t})+\sigma\sqrt{\frac{1}{2\kappa}(1-e^{-2\kappa\Delta t})}X_{h+1},$$
where $X_1,...X_{K}$ are independent draws from a standard normal distribution.
Similarly, we can get the weight as follows
$$p_{i|1}(y)=\phi\left(\frac{y-(e^{-\kappa\Delta t} x_i+\theta(1-e^{-\kappa\Delta t}))}{\sigma\sqrt{\frac{1}{2\kappa}(1-e^{-2\kappa\Delta t})}}\right)\Bigg/\phi\left(\frac{y-(e^{-\kappa\Delta t} x_1+\theta(1-e^{-\kappa\Delta t}))}{\sigma\sqrt{\frac{1}{2\kappa}(1-e^{-2\kappa\Delta t})}}\right),$$
which can be simplified to
$ p_{i|1}(y) = A \exp(B y), $ where coefficients are given by
\begin{align*}
A & = \exp \bigg( - \frac{\kappa e^{-\kappa \Delta t}(x_i - x_1) \big( e^{-\kappa \Delta t}(x_i + x_1) + 2\theta (1 - e^{-\kappa \Delta t})\big) }{\sigma^2(1-e^{-2\kappa \Delta t})} \bigg), \\
B & = \frac{2\kappa (x_i - x_1) e^{-\kappa \Delta t}}{\sigma^2(1-e^{-2\kappa \Delta t})}.
\end{align*}
\end{example}
\begin{example}
Assume that the equity return process is modeled by a two-state regime switching log-normal model \citep*{Hardy2001} with parameters $\Theta=\{\mu_1, \sigma_1, \mu_2, \sigma_2, p_{12}, p_{21}\}$. In such a model, the equity process switches between two regimes with low and high volatilities.  Let $s_k:=s_{t_k}$  denote the regime at time $t_k$ and $F_k:=F_{t_k}$ be the  equity return at time $t_k$. The two regimes are represented by $1$ and $2$, i.e. $s_k\in \{1,2\}.$ There are two risk factors in this model, which are modeled by the bivariate process $\{(F_k, s_k), k=1, 2, \cdots \}$. 
The equity return is log-normally distributed, i.e.
 $$ \ln \frac{F_{k+1}}{F_k}\Bigg|_{s_{k+1}}\sim N(\mu_{s_{k+1}}\Delta t, \sigma_{s_{k+1}}^2\Delta t).$$ The transition probability from regime $m$
 to $l$ is given by $p_{ml}=\mathbb{P}(s_{h+1}=l|s_h=m), m,l=1,2$.
Given the risk horizon $\tau$ and the outer scenarios $x_1,x_2,...,x_n$ where $x_i := F_{k}^{(i)}$. We need to simulate the path of $(F_h^{(i)}, s_h^{(i)})$ for $h>k$, which is determined by
\be
 F^{(i)}_{h+1}=F^{(i)}_h\exp(\mu_{s_h}\Delta t+\sigma_{s_h}\sqrt{\Delta t} X_{h+1}), \ \ \ \ h>k,\nonumber
\ee
where $X_1, \cdots, X_{K}$ are independent draws from a standard normal distribution and $s_h$ is the regime applying in the interval $[t_h, t_{h+1})$. The regime $s_{h+1}$ is simulated by a uniform random variable, and is determined by $s_h$ and the transition probability. Then the weight is based on the regime applying in the interval $[t_k, t_{k+1})$ in each scenario. 
Let $q$ represent the density function of $(F_{k+1},s_{k+1})$ conditioned on $(F_k,s_k)$, then we have
$$q(y,s_{k+1}|x,s_k)=\mathbb{P}(s_{k+1}|s_{k})f(y|s_{k+1}, x),$$
where $$f(y|s_{k+1}, x)=\phi\left(\frac{\log\left(\frac{y}{x}\right)-\mu_{s_{k+1}}\Delta t}{\sigma_{s_{k+1}}\sqrt{\Delta t}}\right),$$
and $\phi$ is the standard normal probability density function. In such a model, the distorted weight is given by $p_{(i,m)|(1,l)}$ where $l,m$ are the states of reference point and target point, respectively.
Define $q_{i,m}(y,s):=q(y,s|F_k=x_i,s_k=m)$ and $f_i(y|s)=f(y|s, F_k=x_i)$. Therefore, for $m,l=1,2$ the weights can be written as
$$p_{(i,m)|(1,l)}(y, s)=\frac{q_{i,m}(y,s)}{q_{1,l}(y,s)}, $$
which can be simplified to
\[p_{(i,m)|(1,l)}=\frac{p_{ms}}{p_{ls}}\frac{f_i(y|s)}{f_1(y|s)}.\]

\end{example}

\subsection{Analysis of Estimators}

\subsubsection{Bias and variance}
In this subsection, we analyze the bias and variance of estimator  $\widetilde{L}_i$ under $\mathbb{Q}_i$ and the convergence of estimator $\widetilde{\rho}_{SN}$. The error analysis of statistic $\widetilde{L}_i$ is similar to the importance sampling method  \citep*{Hesterberg1995,Skare2003}, and the convergence of $\widetilde{\rho}_{SN}$ is an extension of the work on the standard nested Monte Carlo \citep*{Rainforth2018a}.
\begin{proposition}\label{thm:bias_var}
The asymptotic bias and variance of $\widetilde{L}_i$ are given by
\be
&& \text{Bias} (\widetilde{L}_i)=0,\nonumber\\
&&\text{Var}^{\mathbb{Q}_1}(\widetilde{L}_i)=O\left(\frac1m\right),\qquad \mbox{ as } m \rightarrow \infty.
\ee
\end{proposition}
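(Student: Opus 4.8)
The plan is to recognize $\widetilde{L}_i$ as an ordinary importance sampling estimator and to exploit the i.i.d. structure of the recycled sample $(Z_{1,1},\ldots,Z_{1,m})$ under $\mathbb{Q}_1$. I would set $W_{i,j} := p_{i|1}(Z_{1,j})\, g(Z_{1,j})$, so that $\widetilde{L}_i = \frac{1}{m}\sum_{j=1}^m W_{i,j}$ is simply an average of i.i.d. random variables, and then analyze its first two moments under $\mathbb{Q}_1$.

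For the bias, I would first apply linearity of expectation to obtain $\mathbb{E}^{\mathbb{Q}_1}[\widetilde{L}_i] = \mathbb{E}^{\mathbb{Q}_1}[W_{i,1}]$, since the $W_{i,j}$ are identically distributed. The change-of-measure identity \eqref{qi}, namely $\mathbb{E}^{\mathbb{Q}_1}[p_{i|1}(Z)\, g(Z)] = \mathbb{E}^{\mathbb{Q}_i}[g(Z)] = L_i$, immediately identifies this expectation with $L_i$, so that $\text{Bias}(\widetilde{L}_i) = \mathbb{E}^{\mathbb{Q}_1}[\widetilde{L}_i] - L_i = 0$. This step is essentially the Radon-Nikodym definition in \eqref{p_rat} and requires nothing beyond $\mathbb{Q}_i \ll \mathbb{Q}_1$ and integrability of $g$ under $\mathbb{Q}_i$.

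For the variance, I would use the independence of $W_{i,1},\ldots,W_{i,m}$ to write $\text{Var}^{\mathbb{Q}_1}(\widetilde{L}_i) = \frac{1}{m^2}\sum_{j=1}^m \text{Var}^{\mathbb{Q}_1}(W_{i,j}) = \frac{1}{m}\,\sigma_i^2$, where $\sigma_i^2 := \text{Var}^{\mathbb{Q}_1}\!\big(p_{i|1}(Z)\, g(Z)\big)$ does not depend on $m$. A further change of measure rewrites the second moment as $\mathbb{E}^{\mathbb{Q}_1}[p_{i|1}^2(Z)\, g^2(Z)] = \mathbb{E}^{\mathbb{Q}_i}[p_{i|1}(Z)\, g^2(Z)]$, so that $\sigma_i^2 = \mathbb{E}^{\mathbb{Q}_i}[p_{i|1}(Z)\, g^2(Z)] - L_i^2$. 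Once $\sigma_i^2 < \infty$, the conclusion $\text{Var}^{\mathbb{Q}_1}(\widetilde{L}_i) = \sigma_i^2/m = O(1/m)$ is immediate.

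The main obstacle, and the only place where a genuine hypothesis is needed, is establishing $\sigma_i^2 < \infty$, i.e. the square-integrability of $p_{i|1}(Z)\, g(Z)$ under $\mathbb{Q}_1$. This is the familiar failure mode of importance sampling: if the density ratio $p_{i|1}$ has heavy tails relative to $g$, then $\mathbb{E}^{\mathbb{Q}_1}[p_{i|1}^2\, g^2]$ may diverge even though the estimator remains unbiased, and the $O(1/m)$ rate would fail. I would therefore impose the moment condition $\mathbb{E}^{\mathbb{Q}_1}[p_{i|1}^2(Z)\, g^2(Z)] < \infty$ and verify it in the models of Section \ref{sec_weight}. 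There the weight is of the form $A\, y^B$ (Example \ref{exa1}) or $A\exp(By)$, while $Z$ has log-normal or Gaussian marginals whose polynomial and exponential moments are all finite; hence the product is square-integrable whenever $g$ grows at most polynomially, and the condition holds. Combining the bias and variance computations then yields the stated zero bias and $O(1/m)$ variance.
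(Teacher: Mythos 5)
Your proof is correct and follows essentially the same route as the paper's: write $\widetilde{L}_i$ as an i.i.d. average of $p_{i|1}(Z_{1,j})g(Z_{1,j})$, get unbiasedness from the change-of-measure identity \eqref{qi}, and obtain $\text{Var}^{\mathbb{Q}_1}(\widetilde{L}_i)=\frac1m\big(\mathbb{E}^{\mathbb{Q}_i}[p_{i|1}(Z)g^2(Z)]-L_i^2\big)=O(1/m)$. The only difference is that you explicitly state and verify the square-integrability condition $\mathbb{E}^{\mathbb{Q}_1}[p_{i|1}^2(Z)g^2(Z)]<\infty$, which the paper leaves implicit; this is a welcome refinement rather than a different argument.
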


\begin{proof}
It follows from \eqref{qi} and \eqref{inner1} that the estimator $\widetilde{L}_i$ is unbiased. Since $(Z_{1,1}, Z_{1,2},...,Z_{1,m})$ 
is a sample of i.i.d. random variables generated from the random element $Z$ conditioned on the $1$-st outer loop scenario. The variance of $\widetilde{L}_i$ under $\mathbb{Q}$ can be written as
\be
\text{Var}^{\mathbb{Q}_1}\left(\widetilde{L}_i \right)&=&\frac1m \text{Var}^{\mathbb{Q}_1}[p_{i|1}(Z_1)g(Z_1)]
=\frac1m\Big(\mathbb{E}^{\mathbb{Q}_i}[p_{i|1}(Z_1)g^2(Z_1)]-L_i^2\Big).\nonumber \ \ \ \ \ 
\ee
\end{proof}
Here we provide some comparison of the variances of $\hat{\rho}_{\mathrm{SN}}$ and $\hat{\rho}_{\mathrm{SR}}.$ In particular, we focus on the special case that $f(x)=x$ in \eqref{riskmeas} and $\rho=\mathbb{E}(L)$. For brevity, we denote for $i=1,2,$
\by
&&A_l:=\mathbb{E}\left[\left(p_{i|1}(Z_{1,1})g(Z_{1,1})\right)^l\right], \qquad B_l:=E[\left(g(Z_{1,1})\right)^l],  \\ &&C:=\mathbb{E}\left[p_{i|1}(Z_{1,1})\left(g(Z_{1,1})\right)^2\right], \qquad D:=\mathbb{E}\left[p_{i|1}(Z_{1,1})p_{j|1}(Z_{1,1})\left(g(Z_{1,1})\right)^2\right],
\ey
\begin{proposition} \label{prop_variance}
The variances can be written as
\be
&&\text{Var}(\widehat{\rho}_\mathrm{SN})=O\left(\frac{1}{mn}\right),\\
&&\text{Var}(\widetilde{\rho}_\mathrm{SR})= O\left(\frac{1}{m}\right).
\ee
\end{proposition}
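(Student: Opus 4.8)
The plan is to specialise to $f(x)=x$, so that $\widehat{\rho}_{\mathrm{SN}}=\frac1n\sum_{i=1}^n\widehat{L}_i$ and $\widetilde{\rho}_{\mathrm{SR}}=\frac1n\sum_{i=1}^n\widetilde{L}_i$, and to compute both variances conditionally on the outer loop scenarios $x_1,\ldots,x_n$, so that the measures $\mathbb{Q}_i$ and the weights $p_{i|1}$ are treated as deterministic. Throughout I would assume that the second moments collected in $B_2$, $C$ and $D$ are finite and bounded uniformly over the scenario indices; this is the familiar importance-sampling requirement that the distorted weights have bounded second moment under $\mathbb{Q}_1$, and it is exactly what makes every bracketed quantity appearing below $O(1)$.

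For the standard nested estimator the decisive feature is independence: the inner samples $(Z_{i,1},\ldots,Z_{i,m})$ feeding $\widehat{L}_i$ are drawn afresh for each $i$, so $\widehat{L}_1,\ldots,\widehat{L}_n$ are mutually independent. Hence
\[
\text{Var}(\widehat{\rho}_{\mathrm{SN}})=\frac{1}{n^2}\sum_{i=1}^n\text{Var}(\widehat{L}_i),
\]
and a direct computation (analogous to Proposition~\ref{thm:bias_var}) gives each $\text{Var}(\widehat{L}_i)=\frac1m\big(C-L_i^2\big)=O(1/m)$. Summing the $n$ terms and dividing by $n^2$ gives $\text{Var}(\widehat{\rho}_{\mathrm{SN}})=O(1/(mn))$, which is the first claim.

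For the sample recycling estimator the key algebraic step is to interchange the two summations so as to expose the shared sample. Writing
\[
\widetilde{\rho}_{\mathrm{SR}}=\frac1m\sum_{j=1}^m W_j,\qquad W_j:=\frac1n\sum_{i=1}^n p_{i|1}(Z_{1,j})\,g(Z_{1,j}),
\]
I observe that $W_1,\ldots,W_m$ are i.i.d.\ under $\mathbb{Q}_1$, since each $W_j$ is a function of the single path $Z_{1,j}$; therefore $\text{Var}(\widetilde{\rho}_{\mathrm{SR}})=\frac1m\,\text{Var}^{\mathbb{Q}_1}(W_1)$. Expanding $W_1$ into its $n$ summands and using the $A$, $C$, $D$ notation yields
\[
\text{Var}^{\mathbb{Q}_1}(W_1)=\frac{1}{n^2}\sum_{i=1}^n\big(A_2-L_i^2\big)+\frac{1}{n^2}\sum_{i\neq j}\big(D-L_iL_j\big).
\]
Because all $n$ target estimators reuse the same path $Z_{1,j}$, the off-diagonal covariances $D-L_iL_j$ do not vanish. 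The diagonal sum has $n$ terms and contributes $O(1/n)$, whereas the off-diagonal sum has $n(n-1)$ nonzero terms, each $O(1)$, and so contributes $O(1)$. Thus $\text{Var}^{\mathbb{Q}_1}(W_1)=O(1)$ and $\text{Var}(\widetilde{\rho}_{\mathrm{SR}})=O(1/m)$, the second claim.

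The main obstacle, which is also the conceptual point of the proposition, is the non-vanishing of the off-diagonal term: one has to argue that $\frac{1}{n^2}\sum_{i\neq j}(D-L_iL_j)$ is genuinely of order $1$ and not $o(1)$, since it is precisely this term that robs the outer averaging of the extra factor $1/n$ enjoyed by the standard method. Making this rigorous requires the uniform integrability of the products $p_{i|1}p_{j|1}g^2$ over all pairs $(i,j)$ so that each $D$ is finite and bounded; in a concrete model this reduces to checking that the density ratios $p_{i|1}$ have bounded second moment under $\mathbb{Q}_1$, which can be read directly from the closed-form weights in Example~\ref{exa1}. The remaining work is the routine bilinear expansion of the two variances.
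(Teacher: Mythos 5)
Your proposal is correct and follows essentially the same route as the paper: for $\widehat{\rho}_{\mathrm{SN}}$ you use independence of the $\widehat{L}_i$ exactly as the paper does, and for $\widetilde{\rho}_{\mathrm{SR}}$ your key step of interchanging the two sums to write $\widetilde{\rho}_{\mathrm{SR}}=\frac1m\sum_{j=1}^m W_j$ with i.i.d.\ $W_j$ is precisely the paper's argument, your variance-plus-covariance decomposition of $\mathrm{Var}(W_1)$ being the same bilinear expansion the paper carries out explicitly in terms of $A_l,B_l,C,D$. Your added points (conditioning on the outer scenarios, uniform second-moment bounds, and the non-vanishing of the off-diagonal $D$ terms) only make explicit assumptions the paper leaves implicit; they do not change the proof.
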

\begin{proof}
Since $\widehat{L}_i, i=1,2,...,n$ are i.i.d. estimators and $Z_{i,1},Z_{i,2},...,Z_{i,m}$ are i.i.d. random variables, it follows from \eqref{inner2} and \eqref{est_sn_out} that
$$\text{Var}(\widehat{\rho}_{\text{SN}})=\frac{1}{n}\text{Var}\left(\widehat{L}_1\right) = \frac{1}{mn} \text{Var}(g(Z_{1,1}))=\frac{1}{mn}(B_2-B_1^2).$$
In view of \eqref{inner1}, and \eqref{est_sr_ou}, we can write
\be
\widetilde{\rho}_\mathrm{SR}=\frac{1}{n}\sum_{i=1}^n\widetilde{L}_i= \frac{1}{n}\sum_{i=1}^n\frac{1}{m}\sum_{j=1}^mp_{i|1}(Z_{1,j})g(Z_{1,j})=\frac{1}{m}\sum_{j=1}^m \left(\frac{1}{n}\sum_{i=1}^n p_{i|1}(Z_{1,j})g(Z_{1,j})\right)\nonumber.
\ee
Therefore,
\be
\text{Var}(\widetilde{\rho}_{\mathrm{SR}})&=&\frac{1}{m}\text{Var}\left(\frac{1}{n}\sum_{i=1}^n p_{i|1}(Z_{1,1})g(Z_{1,1})\right)=\frac{1}{mn^2}\text{Var}\left[g(Z_{1,1})\left(1+\sum_{i=2}^n p_{i|1}(Z_{1,1})\right)\right]\label{equa_smr}\\
&=&\frac{1}{mn^2}\left(\mathbb{E}\left[\left(g(Z_{1,1})\left(1+\sum_{i=2}^np_{i|1}(Z_{1,1})\right)\right)^2\right]-\left(\mathbb{E}\left[g(Z_{1,1})\left(1+\sum_{i=2}^n p_{i|1}(Z_{1,1})\right)\right]\right)^2\right)\nb\\
&=& \frac{1}{mn^2}\left[B_2+(n-1)A_2+2(n-1)C+(n^2-3n+2)D-(B_1+(n-1)A_1)^2\right].\nb
\ee
The asymptotics follow immediately from the results above.
\end{proof}

\begin{figure}[h]
\includegraphics[width=1.1\textwidth]{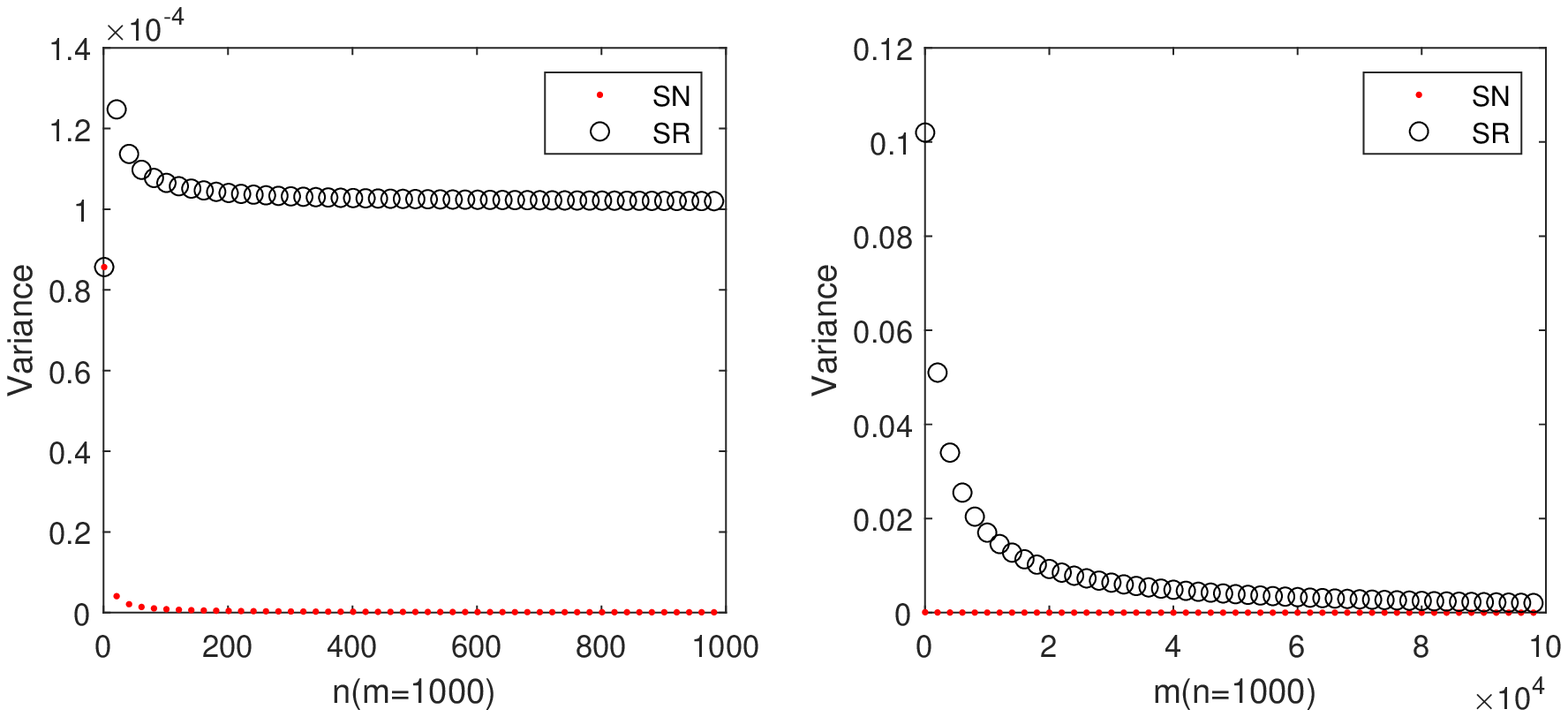}
\caption{Variances of $\widehat{\rho}_{\mathrm{SN}}$ and $\widetilde{\rho}_{\mathrm{SR}}$.
}\label{variance}
\end{figure}

To illustrate convergence rates, we consider an example where $X$ is uniformly distributed on $[-1,1]$ and $Z$ has a standard normal distribution in \eqref{inner}. In such a case, we can calculate the exact loss $L=\mathbb{E}[g(Z)|X]=\mathbb{E}[\sqrt{2/\pi}\exp (-2(Z-X))|X].$ Details of the calculation are left in Appendix \ref{sec:app}. The left panel in Figure \ref{variance} shows the changes in variances of $\widehat{\rho}_{\mathrm{SN}}$ and $\widetilde{\rho}_{\mathrm{SR}}$ with the increasing number of outer loops $n$ and the fixed number of inner loops $m=1,000$. When $n=1$, both estimators are precisely the same as there is only one set of inner loop paths and hence they have the same variance, i.e. $1/m\text{Var}(g(Z_{1,1}))$. When $n=2$, the jump in the variance of SRM estimator is due to the presence of error from using the inner loop sample of a reference point for the target point. As $n$ increases, while  the variance of $\widehat{\rho}_{\mathrm{SN}}$ decreases, it does not diminish as quickly as that of $\widetilde{\rho}_{\mathrm{SR}}.$ This is because all target points on outer loop scenarios use exactly the same set of inner loop paths from the reference point. All portfolio loss estimators $\tilde{L}_i$'s are driven by the same source of randomness $(Z_{1,1}, \cdots, Z_{1,m}).$ Therefore, they tend to overestimate or underestimate all in the same direction and the sample errors in $\widehat{L}_i$ do not offset each other. In contrast, each estimate of $\widehat{L}_i$ is based on an independent sample of $(Z_{i,1}, \cdots, Z_{i,m})$ and hence the sample errors in $\widehat{\rho}_{\mathrm{SN}}$ average out. The right panel of Figure \ref{variance} shows the convergence of variances of $\widehat{\rho}_{\mathrm{SN}}$ and $\widetilde{\rho}_{\mathrm{SR}}$ with an increasing number of inner loop paths $m$ and a fixed number of outer loop scenarios $n=1,000.$ In such an experiment, the increased inner loop sample size significantly improves the accuracy of estimation involving the reference point and hence in turn reduces the error in the estimation of other target points.  The value of the difference $\mathrm{Var}(\widehat{\rho}_{\mathrm{SN}})-\mathrm{Var}(\widetilde{\rho}_{\mathrm{SR}})$ converges to the constant $1.0150\times 10^{-4}$ which is given by $(D-A_1^2)/m$. This numerical example confirms the observation earlier that the standard nested Monte Carlo estimator tends to converge faster than the sample recycling method. The real purpose of the sample recycling method is to give up some accuracy in exchange for high efficiency for any fixed computational budget. The comparison of computational effort is discussed in the next subsection.


\subsubsection{Computational efforts}
 To compare the computational effort, we should first look at algorithms of both standard nested Monte Carlo and sample recycling methods.
\begin{algorithm}[H] 
\label{algo1}
\caption{Estimate risk measure $\rho$ using $\widehat{\rho}_{\mathrm{SN}}$}
\begin{algorithmic}
\State Generate $n$ outer scenarios 
\For {$\ i = 1 \  \text{to}  \ n$}
\State Conditioned on scenario $\omega_i$, generate $m$ i.i.d. inner pathes $Z_{i,1}, Z_{i,2},\ldots,Z_{i,m}$  
\State Calculate the sequence $g(Z_{i,1}), g(Z_{i,2}),...,g(Z_{i,m})$
\State $\widehat{L}_i \leftarrow (1/m) \sum_{j=1}^m g(Z_{i,j})$
\EndFor
\State $\widehat{\rho}_{\mathrm{SN}} \leftarrow(1/n)\sum_{i=1}^n f(\widehat{L}_i)$

\end{algorithmic}
\end{algorithm}
\begin{algorithm}
\label{algo2}
\caption{Estimate risk measure $\rho$ using $\widetilde{\rho}_{\mathrm{SR}}$ }
\begin{algorithmic}
\State Generate $n$ outer scenarios 
\State Conditioned on the $1$-st scenario, generate $m$ i.i.d. inner pathes $Z_{1,1}, Z_{1,2},...,Z_{1,m}$
\State Calculate the sequence $g(Z_{1,1}), g(Z_{1,2}),...,g(Z_{1,m})$
\For {$\ i = 1 \  \text{to}  \ n$}
\State Calculate the sequence $p_{i|1}(Z_{1,1}), p_{i|1}(Z_{1,2}),...,p_{i|1}(Z_{1,m}) $
\State $\widetilde{L}_i\leftarrow \frac{1}{m}\sum_{j=1}^{m}p_{i|1}(Z_{1,j}) g(Z_{1,j})$
\EndFor
\State $\widetilde{\rho}_{\mathrm{SR}} \leftarrow(1/n)\sum_{i=1}^n f(\widetilde{L}_i)$
\end{algorithmic}
\end{algorithm}
According to these algorithms, the estimation of $\widehat{\rho}_{\mathrm{SN}}$ requires generating a total of $nm$ inner paths and evaluating the $g(Z_{i,j})$ for a total of $nm$ times, while the estimation of $\widetilde{\rho}_{\mathrm{SR}}$ uses only $m$ inner paths, the computation of $g(Z_{1,j})$ for $m$ times and that of $p_{i|1}(Z_{1,j})$ for $(n-1)m$ times. In other words, we can measure the computational efforts with the following units:
\begin{itemize}
  \item$ \gamma:=  \text{simulation time of each inner path} \ Z_{i,j}+\text{calculation time of each} \ g(Z_{i,j})$,
  \item $\delta:= \text{calculation time of each} \ p_{i|1}(Z_{1,j})$.
\end{itemize}
We use CE to denote the computational effort of each method. Then the computational efforts required by the two methods are given by
\be
&& \text{CE}_{\mathrm{SN}}=nm\gamma \label{cesn}\\
&&  \text{CE}_{\mathrm{SR}}=m\gamma+(n-1)m\delta. \label{cesr}
\ee
The main computational difference depends on the sizes of $\gamma $ and $\delta$. It is clear that when $\gamma=\delta$ the two methods require exactly the same amount of computational resources. Note, however, that $\gamma$ includes the computation of each inner loop and cash flow projection.  If the financial instrument is path-dependent, then such a calculation can be very time-consuming. While the value of $\delta$ is determined by a likelihood, the Radon-Nikodym derivative is not path-dependent in Markov models as shown in \eqref{dis_wei}. The real advantage of sample recylcing method is only shown when $\gamma$ far exceeds $\delta,$ which is often the case with long-term products and very sophisticated evaluation of cash flows.

\subsection{Extension to multiple reference points} \label{sec:multi}

It follows from Theorem \ref{prop_variance} that for a fixed number of outer loop scenarios the sample recycling estimator achieves the same rate of convergence, $O(1/m)$ as the standard Monte Carlo estimate. Nonetheless, the main advantage of this method is to enhance efficiency by reducing computational efforts. In order to improve the accuracy of this method, one can introduce multiple reference points for variance reduction. For example, consider $b$ reference points $x_1, x_2, \ldots, x_b$ and we want to estimate the portfolio loss for the target point $x_i$ ($i > b$). We can take a weighted average of estimates based on individual reference points given in \eqref{inner1},
$$ \widetilde{L}_i :=  \sum_{k=1}^{b} w_{ik} \bigg( \frac{1}{m}\sum_{j=1}^{m}p_{i|k}(Z_{k,j}) g(Z_{k,j}) \bigg), $$
where the weights shall satisfy $ w_{ik} \geq 0$ and $\sum_{k=1}^b w_{ik} = 1$ for each target point $i =b+1, \cdots, n$.

A simple approach is to use equal weights, i.e. $ w_{ik} =1/b $ for $k=1, \cdots, b$, which corresponds to the simple average of $\widetilde{L}_i$ estimated using each reference point. Since samples generated for reference points are mutually independent, an advantage of this approach is the reduction of variance of $\widetilde{L}_i$ due to the increase of sample size to $mb$,
$$ \text{Var}\bigg[ \sum_{k=1}^{b} \frac{1}{b} \bigg( \frac{1}{m}\sum_{j=1}^{m}p_{i|k}(Z_{k,j}) g(Z_{k,j}) \bigg) \bigg] = \frac{1}{mb} \text{Var}\big[ p_{i|k}(Z_{k,j}) g(Z_{k,j}) \big]. $$

Another approach is to apply a proximity rule. We can break the entire range of scenarios into a number of blocks and select one reference point in each block. Then we generate a set of inner risk paths for each reference point. Inner loop estimation for other target points in each block uses only the reference point in that block, i.e., $w_{ik}= 1$ if $i$ is in the block with $k$ and $w_{ik}=0$ otherwise. This consideration is inspired from potential higher variance due to reference points being far from target, which is reflected in the numerator terms in $\text{Var}(\widetilde{L}_i)$ in Theorem \ref{thm:bias_var},
$$ \text{Var}(\widetilde{L}_i) = \frac1{m}\mathbb{E}[p_{i|k}(Z_k)g^2(Z_k)]-(L_i)^2 \qquad \text{v.s.} \qquad \text{Var}[\widehat{L}_i] = \frac1m \mathbb{E}[g^2(Z_i)]-(L_i)^2. $$
If reference point $k$ is properly chosen for each $i$, then we can achieve a reduction in variance.

In the following examples, we consider a single risk factor for simplicity and use the absolute difference as a metric to assign target scenarios into blocks. In higher dimensional or more complicated cases, one can define more suitable distance metrics on the sample space of risk factor $F$ for the assignment of reference points. There are two common methods for block partitioning. (1) Equidistant partition: keep the same distance between boundary points in each block; (2) Quantile partition: use order statisics or empirical quantiles to allocate $F_k$ into blocks, each of which contains the same number of points. For each block, we shall choose one reference point, for example, the midpoint or a boundary point.
\subsection{Numerical examples}

As a trade-off between sampling variance and computational effort, we observe that the sample recycling method tends to reduce computational effort at the expense of increased variance. We offer a number of examples where the inner simulation and the evaluation of $g(Z_{i,j})$ can be computationally much more challenging than that of $p_{i|1}()$.

We assume that all of the underlying asset prices $\{F_t\}_{t\geq 0}$ follow geometric Brownian motion processes and that asset prices at the risk horizon $\tau$ (outer scenarios) are evaluated under a real-world measure $\mathbb{P}$. While in theory we can use a single reference point to estimate portfolio losses for all other target points, our experiments show that more reference points can significantly improve accuracy. There are many methods to determine the reference points. For example, the reference points can be chosen equidistantly. In each trial, we sort the samples $(x_k=F_{\tau}^{(k)}, k=1,2,\ldots,n) $ and calculate the difference between the maximum and the minimum. Let $s$ be the block number, then samples can be divided equidistantly into $s$ intervals, and each interval has same range. In each block,  the intermediate point or endpoints can be chosen as the reference points. In the 
following numerical example, we implement the estimation by dividing blocks. 


\begin{example}\label{ex_bar}
Consider an asset with initial price $F_0 = 100$, real-world drift $\mu = 8 \% $, and instantaneous volatility $\sigma_0 = \sigma_1 = 20\%$. Let the risk-free continuously compounding interest rate be $r=3\%$. Construct a portfolio of three partial-time barrier options that can only be knocked in or out on the interval $[\tau, T]$ where risk horizon is $\tau=1/52$ year and maturity time $T=1/12$ year. This model has been studied with least squares Monte Carlo method in \cite{broadie2015risk}. The portfolio consists of the following positions:
\begin{enumerate}
  \item Long one down-and-out put option with strike $K_1=101$ and barrier $H_1=91$.
  \item Long one down-and-out put option with strike $K_2=110$ and barrier $H_2=100$.
  \item Short one down-and-out put option with strike $K_3=114.5$ and barrier $H_3=104.5$.
\end{enumerate}
We aim to estimate the risk measure $\alpha=\mathbb{E}[(L_{\tau}-c)^+]$, where the threshold is the $95$-th percentile of the portfolio loss $L_{\tau}$, i.e. $c=\mathrm{VaR}_{0.95}(L_{\tau})=0.3608$. { Let $\underline{F}_T$ define the minimum asset price on $[\tau, T]$, and $F_T$ define the final asset price, then the portfolio loss at time $\tau$ is given by 
\be
L_{\tau}&=&e^{-r(T-\tau)}\mathbb{E}\left[(K_3-F_T)^+I(\underline{F}_T>H_3)-(K_2-F_T)^+I(\underline{F}_T>H_2)\right.\nonumber \\ &&\left.-(K_1-F_T)^+I(\underline{F}_T>H_1)|\mathcal{F}_{\tau}\right]-(P_3-P_1-P_2),\nonumber
\ee
where $I(\cdot)$ is the indicator function, and $P_1, P_2,P_3$ define the purchase prices of three options at time $\tau$, that is
$$P_i=e^{-r(T-\tau)} \mathbb{E}\left[\mathbb{E}\left[(K_i-F_T)^+I(\underline{F}_T>H_i)|\mathcal{F}^{\mathbb{Q}}_{\tau}\right]\right],\ \  i=1,2,3.$$
where $\mathcal{F}^{\mathbb{Q}}_{\tau}$ means the outer scenario generated by risk-free interest rate $r$.
}\end{example}

We compare efficiency and accuracy of three methods: (1) standard nested Monte Carlo simulation, (2) least squares Monte Carlo introduced in \cite{broadie2015risk}, and (3) sample recycling method proposed in this paper. Note that the payoff of a barrier option depends only on the minimum underlying asset price and the final asset price on time $[\tau,T]$. In the first numerical calculation, we shall simulate these two quantities instead of sampling the entire sample path (c.f. \cite{Martin2010}). Also note that the closed form expression for the portfolio losses $L_{\tau}$ given a risk factor scenario can be found in \cite{Haug2007}. Therefore, the risk measure $\alpha$ can be precisely computed by the simulation in the outer stage. Details of each method can be found below.

\begin{itemize}

  \item Standard nested Monte Carlo simulation

  It is known from \cite{broadie2015risk} that, given a fixed computing budget $k=mn$, the asymptotically optimal choice to minimize the MSE of the estimator is given by $n^*=\beta k^{2/3}$ outer stage scenarios and $m^*=k^{1/3}/\beta$ inner stage paths, where $\beta$ is determined by minimizing the asymptotic MSE and is difficult to derive. In this example, we use the optimized parameter value $\beta^{*}=0.076$ suggested by \cite{broadie2015risk}. In this numerical example, we set $k=10^6$ for the budget allocation, which results in $m^*=1,316$ inner paths and $n^*=760$ outer scenarios. Each scenario or path is based on the simulation of  $(\underline{F}_T, F_T)$, and the simulation method can be found in \cite{Haug2007}.
 \item Risk estimation via regression (least squares Monte Carlo)
 
The reference points are chosen equidistantly. We break the range of asset values from the $760$ scenarios into $10$ intervals of equal length. We select the right boundary points as sample outer scenarios and generate corresponding inner paths.
 The portfolio loss is evaluated under each outer scenario and the corresponding set of inner paths. Then we apply the method introduced in \cite{broadie2015risk} with the basis function set $\Phi^{(2)}$, including $1,\ F_{\tau},\ (F_{\tau}-H_1)^+,\ (F_{\tau}-H_2)^+,\ (F_{\tau}-H_3)^+,$ and their corresponding squared functions. We then use the approximate functional relationship by regression to determine portfolio loss under each of the $760$ outer scenarios. 

\item Sample recycling method:

Since we can simulate the minimum asset price $\underline{F}_T$ and the final asset price $F_T$ on $[\tau, T]$ directly, then the performance of portfolio on $[\tau,T]$ can be determined entirely by the pair $Z=(Z_1,Z_2)=(\underline{F}_T, F_T)$. The joint density function of $(\underline{F}_T, F_T)$ is already known (see for example \cite{Martin2010}) as follows,
$$p_i(z_1,z_2)=\frac{2}{\sqrt{2\pi}}\exp{\left(-\left(\widehat{z_2}-\frac{r-0.5\sigma^2}{\sigma_1}\sqrt{T-\tau}\right)^2/2\right)}\left(\widehat{z_2}-2\widehat{z_1}\right)\exp\left(-2\widehat{z_1}(\widehat{z_1}-\widehat{z_2})\right),$$
where $$\widehat{z_2}=\frac{\ln( z_2/x_i)}{\sigma_1\sqrt{(T-\tau)}}, \widehat{z_1}=\frac{\ln (z_1/x_i)}{\sigma_1\sqrt{(T-\tau)}}.$$
Then the likelihood can be calculated by \eqref{p_rat} as follows  $$
    p_{i|1}(z_1,z_2)=\exp\left( \ln \left(\frac{x_1}{x_i}\right) \frac{\ln \left(\sqrt{x_1x_i} z_2/z_1^2\right) +(r-0.5\sigma_1^2)(T-\tau)}{\sigma_1^2(T-\tau)}\right)
    \frac{\ln \left(x_iz_2/z_1^2\right)}{\ln \left(x_1z_2/z_1^2\right)},$$
     where $(x_k=F_{\tau}^{(k)}, k=1,2,\ldots,n)$ is an  i.i.d sample of $F_{\tau}$ and $x_1$ is the referred scenario.
    We use the same method to determine the $10$ referred outer scenarios as in the least squares Monte Carlo method. The evaluation of portfolio loss is carried out in the way outlined in Section \ref{sec:multi}.
\end{itemize}

\begin{table}[h]
\begin{center}
\begin{tabular}{c|c|r}
\hline
 Estimator& MSE & \  \ Time (secs) \\
\hline
Optimal standard nested estimator &$3.1980\times10^{-5}$&$2836.9837$\\
\hline
Sample recycling method &$5.3013\times10^{-5}$& $84.7044$\\
\hline
Regression &$8.4838\times10^{-5}$& $57.5254$  \\
\hline
\end{tabular}
\caption{MSE and run time for the barrier option portfolio with $1000$ independent trials} 
\label{bar_min}
\end{center}
\end{table}

Table \ref{bar_min} displays the MSE and run time for the above-mentioned methods over $1000$ independent trials. It shows that both sample recycling method and regression consume significantly less time than Monte Carlo approach with MSE of the same order. While the sample recycling method in this example requires more time than the least squares Monte Carlo but achieves higher accuracy. It should be pointed out that the determination of optimal parameter $\beta^\ast$ for the budget allocation method requires searching over a set of different potential values. These values are dependent on the specific form of risk measure under consideration and only known for a limited number of risk measures. It is often difficult to determine such values for general risk measures.

%




\begin{table}[h]
\begin{center}
\begin{tabular}{c|c|c}
\hline
 Number of reference points & MSE & \  \ Time (secs) \\
\hline
2 &$8.5734\times10^{-4}$&$74.1199$\\
\hline
5 &$3.8185\times10^{-5}$& $78.7044$\\
\hline
8&$3.2701\times10^{-5}$& $80.6645$  \\
\hline
\end{tabular}
\caption{MSE and run time for sample recycling method with various reference points}
\label{barrier_scr}
\end{center}
\end{table}

It should also be pointed out that the sample recycling method requires fewer reference points to approximate losses. Table \ref{barrier_scr} shows the MSE and time consumption for two, five and eight reference points. There are a total of $1,000$ independent trials for each case. In comparison, the regression method needs at least $10$ sample points because there are $9$ basis functions in this example.


\begin{table}[h]
\begin{center}
\begin{tabular}{c|c|c}
\hline
& $m\gamma$ &$m\delta$ \\
\hline
Time(Sec) &$2.2160\times10^{-3}$&$6.8562\times10^{-5}$\\
\hline
\end{tabular}
\caption{The values of $m\gamma$  and $m\delta$.}
\label{CE_barrier}
\end{center}
\end{table}
\begin{figure}[h]
\centering
 \includegraphics[width=0.7\textwidth]{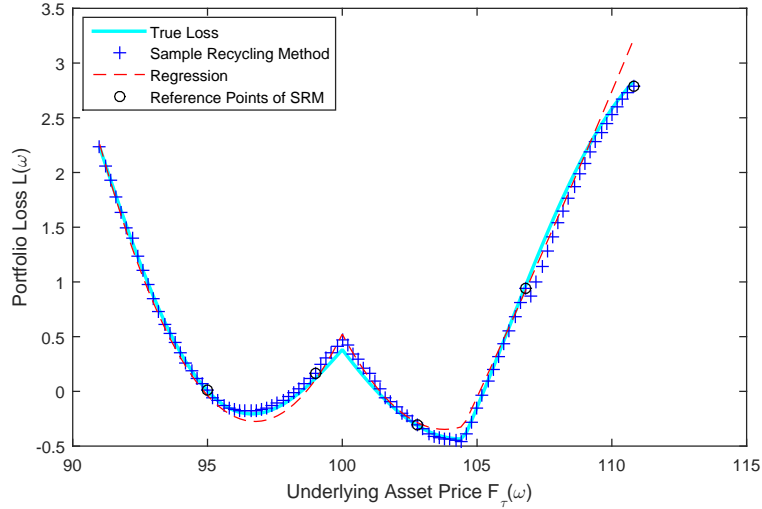}
\caption{Illustration of approximation of loss in barrier options.
}\label{barrier_f1}
\end{figure}

To tie it to the earlier discussion on computational effort, we show in Table \ref{CE_barrier}  computational efforts required for nested simulations. It is clear that in this case that, the simulation time of inner paths and the computation time for inner loop evaluation, $m\gamma$, is greater than the computation time of the likelihood, $m\delta$. In this example,  we use $5$ reference points with the sample recycling method. By definition \eqref{cesn} and \eqref{cesr}, we calculate computational efforts of $1,000$ independent trials given by
\be
&&\text{CE}_{\mathrm{SN}}=1000\times 1000\times m\gamma=2216.00346 \nb \\
&&\text{CE}_{\mathrm{SR}}=1000\times (5 \times m\gamma+995 \times m\delta)=11.4800+68.2193=79.6993.
\ee

Figure \ref{barrier_f1} shows a comparison of estimations for expected excess loss by regression and the sample recycling method. The light blue line represents the true value of expected excess loss $\alpha$ as a function of asset price $F_\tau$. The symbol $+$ shows estimates by the sample recycling method and the dashed line provides estimates by the regression. Both methods produce quite accurate estimates. The regression tends to overestimate for large asset values and underestimate in modest small asset values (between $96$ and $100$). In this graph, the regression approach is based on $20$ equidistant sample points, whereas the sample recycling method uses $5$ reference points, which are shown by the symbol $\circ.$ For the sample recycling methods, we break the range of asset prices into five blocks of equal lengths and use  the right-end point as the reference point for each block. One would notice that expected excess losses are either all overestimated or underestimated in each block.

\bigskip

\begin{example}\label{ex_asian}
Consider a portfolio of financial derivatives written on five underlying assets. Assume that the initial assets prices are all $F_0=100$, and that the assets share common real-world drifts of $\mu=8\%$ and annual volatility of $\sigma=20\%$. The risk-free continuously compounding interest rate is $r=3.5\%$. The asset price processes are assumed to be mutually independent. Suppose that the portfolio consists of $10$ short positions of at-the-money (average price) Asian call options on five underlying assets. All options have the same maturity date $T=1/12$ years and the portfolio is evaluated at $\tau=1/52$ years from now. We want to estimate the expected excess loss $\rho=\mathbb{E}[(L_{\tau}-c)^+]$ with threshold $c$ is the the $99$-th percentile of the portfolio loss, i.e. $c=\mathrm{VaR}_{0.99}=114.8151$. Let $F_{j,t}$, $j=1,2,\ldots,5$ represent the five underlying assets prices and $\Bar{F}_{j}$ represent the arithmetic price on $[\tau,T]$, then the portfolio loss can be given by 
$$L_{\tau}= \mathbb{E}\left[10 e^{-r(T-\tau)}\sum_{j=1}^5(\Bar{F}_{j}-100)^+|\mathcal{F}_{\tau}\right]-C$$
where $C$ is the purchase price of the portfolio, the price of Asian option can be approximated by built-in function of Matlab.
\end{example}
{ In this example, we use the built-in function \textbf{asianbylevy} of Matlab to approximate the closed form pricing solution of continuous arithmetic Asian options  \citep*{Levy1992}, which give rise to the true value of the loss of the portfolio.}
Detailed specification of each method is described below.
\begin{itemize}
  \item Nested Monte Carlo simulation:  In  this  numerical  example,  we  set $k= 10^6$ for the budget allocation and set  $n=1,000$ outer scenarios and $m=1,000$ inner paths to estimate the expected excess loss $\rho$. The portfolio loss is estimated  by simulating the entire sample path of $F_t$  as  Example 3.1. 
  \item Risk estimation via regression: 
  We choose basis functions up to fifth order polynomials. Specifically,  let $F_{j,t}$, $j=1,2,\ldots,5$ represent the five underlying assets prices, the basis functions contain all the following functions:
      $$1,\ F_{j,\tau},\ (F_{j,\tau})^2,\ (F_{j,\tau})^3,\ (F_{j,\tau})^4,\ (F_{j,\tau})^5,\ \ \ j=1,2,\ldots,5.$$
      We use $50$ simulated sample points (each with the inner path number $m=1000$) to perform the regression and to get the proxy function.  The loss $L_{\tau}$ on the  sample points is simulated by Monte Carlo, and the rest $950$ loss value is approximated by the proxy function. 
  \item Sample recycling method: To calculate the loss of the portfolio, we simulate the entire sample path of the underlying assets. Recall the discussion in Section \ref{sec_weight}, $p_{i|1}(\cdot)$ can be determined by the density function of $Z_j=F_{{j,\tau}+\Delta t},j=1,2,\cdots,5$ because of the Markov property.  For each underlying asset, the weight used in the evaluation of Asian options is same as Example  \ref{exa1}
      $$p_{i|1}(z_j)=\exp\left(\ln\left(\frac{x_i}{x_1}\right)\frac{-0.5\ln(x_1x_i)+(r-0.5\sigma^2)(\Delta t)+\ln(z_j)}{\sigma^2(\Delta t)}\right), \ \
   j=1,2,\ldots,5,$$
   where $\Delta t$ is the time step used to simulate the entire sample path of underlying assets. Finally, we take $\Delta t=1/624$ in this numerical example. We divide $10$ blocks for each underlying asset and choose the  intermediate point as the reference point. Then the total number of reference points is $10\times 5=50$.  
\end{itemize}
Table \ref{asian_1} shows that both sample recycling method and regression are more efficient than nested simulation method. The efficiency of sample recycling method is due to the computational effort $m \delta < m \gamma $ (see Table \ref{CE_asian} ). From \eqref{cesn} and \eqref{cesr}, we can calculate the follows results
\be
&&\text{CE}_{\mathrm{SN}}=5\times 1000 \times 1000 \times m\gamma=5682.7\nb\\
&&\text{CE}_{\mathrm{SR} }=5 \times 1000(10 \times m\gamma+990\times m \delta)=55.6827+290.8549=346.5376\nb
\ee
In Table \ref{asian_1}, the MSE of sample recycling method has the same magnitude with nested simulation method, but
the regression method often leads to wrong results because the sample points are insufficient for a basis set containing 26 basis functions. The accuracy of the regression method can be improved by increasing the number of sample points (see Table \ref{asian_reg_2}), but each sample points needs computational efforts $  m \gamma$ to simulate the value in once trial, then Table \ref{asian_reg_2} shows that increasing the number of sample also significantly increases the computational efforts. On the contrary, Table \ref{asian_srm} shows the MSE and run time when increasing the reference points for each Asian option from 15 to 30 in the sample recycling method,  and the results showed that the sample recycling method performs stably using different number of reference points, and the number of reference has less influence to the computational effort . 

\begin{table}[h]
\begin{center}
\begin{tabular}{c|l|r}
\hline
 Estimator& MSE & \  \ Time (secs) \\
\hline
Standard nested estimator &$5.9793\times10^{-3}$&$5724.0761$\\
\hline
Sample recycling method &$5.6534\times10^{-3}$& $379.3743$\\
\hline
Regression &$5.3235$& $315.9398$  \\

\hline
\end{tabular}
\caption{MSE and run time for Asian option portfolio with $1,000$ independent trials.}
\label{asian_1}
\end{center}
\end{table}
{\color{blue}
\begin{table}[h]
\begin{center}
\begin{tabular}{c|c|c}
\hline
& $m\gamma$ &$m\delta$ \\
\hline
Time (secs) &$1.113654\times10^{-3}$&$6.1233\times10^{-5}$\\
\hline
\end{tabular}
\caption{Comparison of computational efforts}
\label{CE_asian}
\end{center}
\end{table}

}
\begin{table}[h]
\begin{center}
\begin{tabular}{c|c|r}
\hline
 Number of reference points& MSE & \  \ Time (secs) \\
\hline
$15\times5$ &$6.5345\times10^{-3}$& $387.8045$\\
\hline
$20\times5$ &$5.9153\times10^{-3}$& $403.5701$\\
\hline
$25\times5$&$6.2073\times10^{-3}$& $416.9511$  \\
\hline
$30\times5$&$6.1332\times10^{-3}$& $422.7213$  \\
\hline
\end{tabular}
\caption{MSE and run time for sample recycling method with different reference points}
\label{asian_srm}
\end{center}
\end{table}
\begin{table}[h]
\begin{center}
\begin{tabular}{c|l|r}
\hline
 Number of reference points& MSE & \  \ Time (secs) \\
\hline
75 &$5.3095$&$403.114$\\
\hline
100 &$1.5625$& $580.9215$\\
\hline
125 &$5.7268\times10^{-2}$& $826.877$ \\
\hline
150&$3.8298\times10^{-2}$& $1022.6205$  \\

\hline
\end{tabular}
\caption{MSE and run time for regression with different sample points}
\label{asian_reg_2}
\end{center}
\end{table}

\begin{example}\label{ex_gmwb}
Another common application of nested Monte Carlo simulation is on the calculation of risk measure for variable annuity guaranteed benefits. Consider one of the most common investment guarantees on variable annuity products, known as the guaranteed minimum withdrawal benefit (GMWB). Suppose that the instantaneous change in fund value is the net effect of proportional return from equity-linking less percentage rider charges and fixed withdrawal
\be
dF_t&=&\frac{F_t}{S_t}d S_t-m_f F_t dt-w dt \nb \\
&=& ((r-m_f)F_t-w)dt+\sigma F_tdW_t, \nb
\ee
where $S_t$  is the equity-index driven by \eqref{geome}, $m_f > 0$ be the rate per time unit of total
fees charged by the insurer, and $w$ be the guaranteed rate of withdrawal per time unit. Let $G$ be the initial deposit, the GMWB rider provides safeguards to
the continuous withdrawal until the initial deposit is completely refunded, i.e. the GMWB matures
at time $T = G/w$. In this example, we take $F_0=G$, meaning that the policyholder is guaranteed to receive a full refund of his or her premium payments. It is only when the account value is depleted prior to the maturity $T$ that the maximum withdrawal rate $w$ is paid at the cost of the insurer. Therefore, the present value of the cost to an insurer of GMWB rider is given by 
$\int_{0}^Te^{-rs}wI(F_s\leq 0)ds$, where $I(\cdot)$ is an indicator function. On the other hand, the insurer receives the distribution of fees from the third party
fund manager, which are often a fixed percentage of the policyholder's account until the account value hits zero. Thus the accumulated present value of the fee income is given by $\int_{0}^Te^{-rs}m_fF_sI(F_s>0))ds$. Therefore,
the liability of insurer at time $\tau$ is given by
\be
L_{\tau}=\mathbb{E}\left[\left.\int_{\tau}^Te^{-r(s-\tau)}(wI(F_s\leq 0)-m_fF_sI(F_s>0))ds\right|\mathcal{F}_{\tau}\right].
\ee
 We calculate the risk measure  $\rho= \text{VaR} _{0.7}[L_{\tau}]$ by Monte Carlo and sample recycling method.
For the numerical calculation, we take $F_0=G=1, \mu=0.08, r=0.05, \sigma=0.2, w=0.1, m_f=0.01$. Suppose that the withdrawal benefit expires in $T=10$ years and the risk measure is evaluated in $\tau=5$ years.
\begin{itemize}
  \item Nested Monte Carlo simulation: We use $n=1000$ outer scenarios and $m=1000$ inner paths to estimate the risk measure. We estimate the liability by simulating the entire sample path of $F_t$ on $[\tau,T)$ with $\Delta t=0.05$, and risk factor $F_t$ is simulated by the following recursion
 \be\label{gmwb_sim}
 F_{h+1}=F_h\exp((r-m_f-0.5\sigma^2)\Delta t+\sigma \sqrt{\Delta t}X_{h+1})-w\Delta t, \  \  h=k,k+1,...,K.
 \ee
  where $X_1,...,X_K$ are independent draws from a standard normal distribution.
\item Risk estimation via regression:  We choose basis functions up to fifth order polynomials: $1,\ F_{\tau},\ (F_{\tau})^2,\ (F_{\tau})^3,\ (F_{\tau})^4,\ (F_{\tau})^5$. We use $50$  sample points (each with the inner path number m = 1000) to perform
the regression. The sample points are chosen the right endpoints equidistantly in each trial.
\item Sample recycling method: The liability is determined by the entire sample path of $ Z=F_t, t\geq \tau$.  From the recursion equation \eqref{gmwb_sim}, we have 
$$(F_{h+1}+\omega \Delta t)/F_h \sim N((r-m_f-\frac{\sigma^2}{2})\Delta t, \sigma^2\Delta t).$$  
 then we can calculate the weight as follows
 \be
 p_{i|1}(z)=\exp\Big(\frac{\ln(x_i/x_1)}{\sigma^2\Delta t}
\Big(-\frac{1}{2}\ln(x_1x_i)-(r-m_f-\frac{1}{2}\sigma^2)\Delta t+\ln (z+w \Delta t) \Big)\Big).\nb
 \ee
\end{itemize}
The first reference point $x_1^{ref}$ is the maximum value of the i.i.d samples  $x_k=:F^{(k)}_{\tau},k=1, 2\cdots, n$, and the rest of reference points are determined by  $x_{k}^{ref}=\inf\{x\in(x_1,x_2,\cdots,x_n)|x_{k-1}^{ref}/x<1.1\}$. In this example, the number of  reference points is $50$ or so. 
\end{example}
\begin{table}[h]
\begin{center}
\begin{tabular}{c|c|c|r}
\hline
 Estimator& VaR$_{0.7}$ &Stand. Dev. & \  \ Time (secs) \\
\hline
Standard nested estimator &$1.0050\times10^{-2}$&$5.6411\times10^{-3}$&$3890.5233$\\
\hline
Sample recycling method &$1.0152\times10^{-2}$&$6.0046\times10^{-3}$& $299.9947$\\
\hline
Regression
&$4.5300\times10^{-2}$&$9.8774\times10^{-3}$&$220.3021$  \\
\hline
\end{tabular}
\caption{The standard deviation and time for the GMWB example with 1000 independent trials.}
\label{gmwb}
\end{center}
\end{table}
\begin{table}[h]
\begin{center}
\begin{tabular}{c|c|c}
\hline
& $m\gamma$ &$m\delta$ \\
\hline
Time (secs) &$3.8349\times10^{-3}$&$5.9691\times10^{-5}$\\
\hline
\end{tabular}
\caption{The values of $m\gamma$  and $m\delta$.}
\label{CE}
\end{center}
\end{table}
 In Table \ref{gmwb}, we calculate the risk measure  $\text{VaR}_{0.7}$  by three methods. In this GMWB example, we cannot find the analytical solution, then we use the standard deviation (SD) of $1000$ trials to present the stability. Both nested Monte Carlo method and sample recycling method show high accuracy. However, the sample recycling method can be less time consuming than the  nested Monte Carlo method. Table \ref{CE} gives the values of $m\gamma$ and $m\delta$, the average time of 1000 independent trials. It is shown that the simulation and calculation process of $g(Z_{i,j})$ of standard nested Monte Carlo method is equal to 64 times of that of $p_{i|1}(\cdot)$. It is important to note that we ran 1000 independent trials and used 50 sample points in the sample recycling method. It follows from \eqref{cesn} and \eqref{cesr} that computational efforts are given by
\be
&&\text{CE}_{\mathrm{SN}}=1000\times 1000 \times m\gamma=3834.95 \nb \\
&&\text{CE}_{\mathrm{SR}}=1000(50\times m\gamma+950 \times m\delta)=191.745+56.70645=248.45.  \nb
\ee
It is clear that the main time consumption is from the simulation and calculation of $g(Z_{i,j})$.

\section{Non-parametric method} \label{sec:DR}

In practice, equity scenarios are typically generated from a sophisticated economic scenario generator. The underlying stochastic models are sometimes unknown to end users of equity scenarios. Therefore, it is possible that the likelihood (distorted weight) in  \eqref{inner1} is not known by analytical formula. In this section, we develop a non-parametric sample recycling method, which does not require prior knowledge about the underlying stochastic model. It is particularly useful when the likelihood cannot be derived explicitly or when underlying asset paths are generated by the empirical data rather than a specific model. .
\subsection{Likelihood ratio estimation}

In this section, we introduce a naive estimation method for the likelihood ratio function $p_{i|1}(\cdot)$. Despite its simplicity, this method demonstrates high
accuracy for loss estimation by numerical examples.

\begin{algorithm}[H] 
\caption{Estimate likelihood ratio $p_{i|1}(\cdot)$ using $\hat{p}_{i|1}(\cdot)$ } 
\label{algo:lambda_est} 
\begin{algorithmic}
 \State Generate $n$ outer scenarios
 \State Conditioned on $i$-th ( $i=1, \cdots, n$) outer scenario, generate $m$ i.i.d. sample points $\{F_{k+1}^{(i,j)}\}_{j=1}^m$, 
 \State Sort the  sample points in increasing order $\{F_{k+1}^{(i,[j])}\}_{j=1}^m$ 
 \State Separate the reference sample points $\{F_{k+1}^{(1,[j])}\}_{j=1}^m$  into $l$ sets and find the counts of each set $(n_1^{(1)}, n_2^{(1)},\cdots, n_l^{(1)})$
 \For {$\ i = 1 \  \text{to}  \ n$}
 \State Find the counts of  target sample points in each set $(n_1^{(i)}, n_2^{(i)},\cdots, n_l^{(i)})$ 
\For {$\ j = 1 \  \text{to}  \ m$}
 \If {$y_{k+1}=F_{k+1}^{(i,[j])}$ in $a$-th set}
\State $\hat{p}_{i|1}(y_{k+1})=\frac{n_a^{(i)}}{n_a^{(1)}}$
\EndIf
\EndFor
\State $p_{i|1}(\cdot)\leftarrow\hat{p}_{i|1}(\cdot)$
\EndFor
 \end{algorithmic}
\end{algorithm}

To illustrate this method, we only consider one risk factor $\{F_t\}_{t\geq 0}$ as Section 3.1 and set an independent and identically distributed inner loop sample $\{F_{k+1}^{(i, j)}\}^m_{j=1}$ for $i=1, \cdots, n,$ generated from outer scenario $F^{(i)}_k$ for a univariate Markov stochastic model. Note that, the set $\{F_{k+1}^{(1, j)}\}^m_{j=1}$ shall be used as a reference point, while others are considered as target points. The method can be broken down into the following steps.

\begin{enumerate}
\item Sort the data set for the sample point in an increasing order: $\{F_{k+1}^{(i, [j])}\}_{j=1}^{m}$ where $[j]$ indicates the $j$-th order statistic. In other words, $F_{k+1}^{(i, [1])}\le F_{k+1}^{(i, [2])} \le \cdots \le F_{k+1}^{(i, [m])}.$
\item Seperate the set of integers $\{1, \cdots, m\}$ into $l$ sets with break points $m_0=0, m_1, m_2, \cdots, m_l=m$. Denote the $a$-th interval of risk factor by
$$ I_a := \big( F_{k+1}^{(1, [m_{a-1}])}, F_{k+1}^{(1,[m_a])} \big]. $$
\item Count the number of observations of samples from target scenario and reference scenario in each interval respectively. Denote the counts by $ ( n_1^{(i)}, n_2^{(i)}, \ldots, n_l^{(i)} ) $ and $ ( n_1^{(1)}, n_2^{(1)}, \ldots, n_l^{(1)} ) $.
\end{enumerate}
We can construct the following likelihood ratio estimator 
\be\label{eq:density_ratio_est}
\widehat{p}_{i|1}(y_{k+1}) =  \frac{n_a^{(i)}}{n_a^{(1)}} \ \ \text{for}\ \  y_{k+1}\in I_a, \ee
resulting in an empirical likelihood ratio function. 
If we choose intervals based on quantiles, we could construct the intervals such that there are equal numbers of points in each interval, i.e. $n_a^{(1)} = m/l$ for each $a$, further simplifying the estimator to
$$ \widehat{p}_{i|1}(y_{k+1}) =  \frac{l \cdot n_a^{(i)}}{m}\ \ \text{for}\ \  y_{k+1}\in I_a. $$
\begin{proposition}
Given i.i.d sample $\{F_{k+1}^{(i, j)}\}_{j=1}^{m}$ for $i=1, \cdots, n$, the estimator \eqref{eq:density_ratio_est} converges pointwise to its true value for each valid input $ y_{k+1} \in I_a$:
$$ \widehat{p}_{i|1}(y_{k+1}) =  \frac{n_a^{(i)}}{n_a^{(1)}} \rightarrow p_{i|1} (y_{k+1}) \quad \text{as} \quad m \rightarrow + \infty,\ l \rightarrow + \infty. 
$$
\end{proposition}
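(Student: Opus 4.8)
The plan is to establish the two limits $m \to \infty$ and $l \to \infty$ in sequence, exploiting the fact that the estimator is a ratio of two empirical frequencies over a common interval. Writing $a=a(y_{k+1})$ for the index of the interval containing $y_{k+1}$, I would first rewrite
$$ \widehat{p}_{i|1}(y_{k+1}) = \frac{n_a^{(i)}}{n_a^{(1)}} = \frac{n_a^{(i)}/m}{n_a^{(1)}/m}, $$
and observe that $n_a^{(i)}/m$ is precisely the empirical measure $\widehat{\mathbb{Q}}_i^{(m)}(I_a)$ of the (random) interval $I_a$ under the $m$ target samples drawn i.i.d.\ from density $p_i$, while $n_a^{(1)}/m = \widehat{\mathbb{Q}}_1^{(m)}(I_a)$ is the corresponding empirical measure under the reference samples. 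The goal is then to show this ratio converges to $p_i(y_{k+1})/p_1(y_{k+1}) = p_{i|1}(y_{k+1})$, the density ratio of \eqref{p_rat}.

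For the first limit, fix $l$ and let $m \to \infty$. The interval endpoints $F_{k+1}^{(1,[m_a])}$ are order statistics of the reference sample, hence empirical quantiles of $p_1$; under continuity and strict positivity of $p_1$ they converge almost surely to the corresponding population quantiles, so $I_a$ converges to a deterministic interval $I_a^*$ whose endpoints are fixed quantiles of $\mathbb{Q}_1$. By the Glivenko--Cantelli theorem the empirical measures $\widehat{\mathbb{Q}}_i^{(m)}$ and $\widehat{\mathbb{Q}}_1^{(m)}$ converge uniformly to $\mathbb{Q}_i$ and $\mathbb{Q}_1$; combining uniform convergence with the convergence of the random endpoints gives
$$ \frac{n_a^{(i)}/m}{n_a^{(1)}/m} \longrightarrow \frac{\mathbb{Q}_i(I_a^*)}{\mathbb{Q}_1(I_a^*)} = \frac{\int_{I_a^*} p_i(u)\,\df u}{\int_{I_a^*} p_1(u)\,\df u} \qquad \text{a.s.} $$
If the quantile construction $n_a^{(1)}=m/l$ is used, the denominator is exactly $1/l$, which is convenient but not essential.

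For the second limit I would let $l \to \infty$. As $l$ grows the limiting interval $I_a^*$ containing the fixed point $y_{k+1}$ shrinks to the singleton $\{y_{k+1}\}$. Applying the mean value theorem for integrals (or the Lebesgue differentiation theorem) to both numerator and denominator, and using the continuity of $p_i$ and $p_1$ at $y_{k+1}$ together with $p_1(y_{k+1})>0$, yields
$$ \frac{\int_{I_a^*} p_i(u)\,\df u}{\int_{I_a^*} p_1(u)\,\df u} \longrightarrow \frac{p_i(y_{k+1})}{p_1(y_{k+1})} = p_{i|1}(y_{k+1}), $$
which is the claimed pointwise limit.

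The main obstacle is that the interval endpoints are random and shared by numerator and denominator, so the two empirical counts are dependent and the cell over which they are evaluated is itself estimated. The clean way around this is to invoke the uniform Glivenko--Cantelli convergence, which controls the empirical measure simultaneously over all intervals and therefore tolerates the data-dependent endpoints. A second subtlety is the interchange of the two limits: for the frequency $n_a^{(i)}/m$ to obey a law of large numbers on each shrinking cell, each cell must retain a growing number of observations, so the result should be read under a regime in which $l \to \infty$ slowly relative to $m$ (for instance $m/l \to \infty$), guaranteeing $n_a^{(i)} \to \infty$ while $I_a \to \{y_{k+1}\}$. Under these regularity conditions---continuity and positivity of the densities at $y_{k+1}$ and a compatible joint growth of $l$ and $m$---the two displayed limits chain together to deliver the stated convergence.
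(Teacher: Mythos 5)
Your proposal follows essentially the same route as the paper's own proof: both rewrite $n_a^{(i)}/n_a^{(1)}$ as a ratio of empirical CDF increments over the cell $I_a$, invoke the Glivenko--Cantelli theorem to pass from empirical to true measures, and then shrink the cell to recover the density ratio $p_i(y_{k+1})/p_1(y_{k+1})$. The only real difference is presentational: the paper takes the joint limit $m,l\to\infty$ in a single step via difference quotients of the CDFs, whereas you take the limits sequentially and state explicitly the regularity conditions (continuity and positivity of the densities, and a growth regime such as $m/l\to\infty$ so each cell retains a growing number of observations) that the paper's one-line joint limit leaves implicit.
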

\begin{proof}
By Glivenko-Cantelli Theorem, an empirical distribution function uniformly converges to the true cumulative density function as the number of i.i.d observations approaches infinity. Let $Q_i$ denote the cumulative distribution function for $F_{k+1}$ under measure $\mathbb{Q}_i$ and $\widehat{Q}_i$ be the empirical distribution function. Then for each scenario $i$,
$$ || \widehat{Q}_i - Q_i  ||_{\infty} \rightarrow 0. $$
Notice that \eqref{eq:density_ratio_est} can be written as
$$ \widehat{p}_{i|1}(F_{k+1}) =  \frac{n_a^{(i)}}{n_a^{(1)}} = \frac{n_a^{(i)} / m}{n_a^{(1)} / m} = \frac{\widehat{Q}_i(F_{k+1}^{(1, [m_{a}])}) - \widehat{Q}_i(F_{k+1}^{(1,[m_{a-1}])})}{\widehat{Q}_1(F_{k+1}^{(1,[m_{a}])}) - \widehat{Q}_1(F_{k+1}^{(1,[m_{a-1}])})}, $$
As $m, l \rightarrow \infty$, we obtain
\begin{align*}
\frac{\widehat{Q}_i(F_{k+1}^{(1, [m_{a}])}) - \widehat{Q}_i(F_{k+1}^{(1,[m_{a-1}]})}{F_{k+1}^{(1, [m_{a}]}-F_{k+1}^{(1, [m_{a-1}]}} &\rightarrow p_i(F_{k+1}).
\end{align*}
 Therefore, their ratio approaches the ratio of limit at each given $F_{k+1}$, 
$$ \widehat{p}_{i|1}(F_{k+1}) \rightarrow \frac{p_i(F_{k+1})}{p_1(F_{k+1})} = p_{i|1}(F_{k+1}). $$
\end{proof}

To illustrate the estimation, we choose two outer scenarios $F_{\tau}=99$ (reference) and $F_{\tau}=99.2$ (target) with $\tau=1/52$ in the geometric Brownian motion, and we generate $1000$ sample points for each scenario.  The histogram and the empirical likelihood ratio function  demonstrate the result of Algorithm \ref{algo:lambda_est} in Figure \ref{nopara}. 
Five intervals are constructed based on the $20$-th, $40$-th, $60$-th, $80$-th quantiles of reference sample points, which means $l=5$. In this example, the left figure shows that $(n_1^{(i)}, \cdots, n_5^{(i)})=(122,174,184,243,272)$ and $n_k^{(1)}=200$ for $k=1,\cdots, 5$. We overlay the histograms and theoretical density functions for both reference point (blue) and target point (red) to show their differences in the left plot while the empirical likelihood ratio function (solid-line) and the true theoretical likelihood ratio (dashed-line) are shown on the right. Keep in mind that the true likelihood ratio function is typically not known in advance. The graph shows a reasonable estimate from empirical data.
\begin{figure}[h]
\centering
 \includegraphics[width=1.1\textwidth]{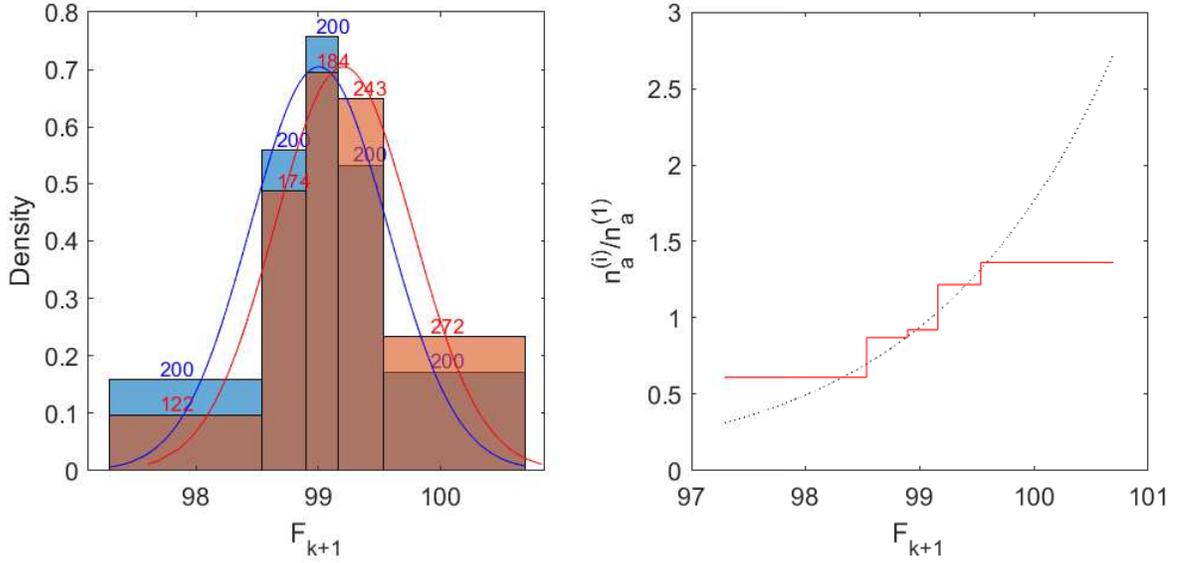}
\caption{The calculation of weight by nonparametric method}\label{nopara}
\end{figure}


Note that estimating probability density is a common question in machine learning. While this paper only discusses a naive method, we believe that many other methods can be used to estimate the likelihood, such as least square importance fitting\citep*{Kanamori2009}, kernel mean matching\citep*{Huang2007}, Kullback-Leibler importance estimation procedure and so on. \cite{Sugiyama2012} offers detailed accounts of machine learning methods.

\subsection{Non-parametric sample recycling method}
In the non-parametric setting, we estimate the theoretical likelihood ratio $p_{i|1}$ by an estimated $\widehat{p}_{i|1}()$ in the estimator \eqref{inner1}. Therefore, We obtain the \textit{empirical sample recycling estimator} of $L_i$,
\be\label{eq:SRM_emp}
\tilde{\tilde{L}}_i=\frac{1}{m}\sum_{j=1}^{m}\widehat{p}_{i|1}(Z_{1,j}) g(Z_{1,j}),
\ee
and the non-parametric sample recycling estimator of the risk measure $\rho$ is given by
\[\tilde{\rho}_{\mathrm{NSR}}=\frac1n \sum^n_{i=1} f(\tilde{\tilde{L}}_i).\]
Note that there is an additional source of randomness in this estimator --- likelihood ratio estimation. As the estimate requires no information about the underlying stochastic model, we do not expect this estimator to outperform the sample recycling method in the previous section. Nevertheless, the estimator \eqref{eq:SRM_emp} offers an appealing non-parametric framework when equipped with a reasonably fast and accurate algorithm to estimate likelihood ratios.

\begin{figure}[h]
\centering
 \includegraphics[width=1.1\textwidth]{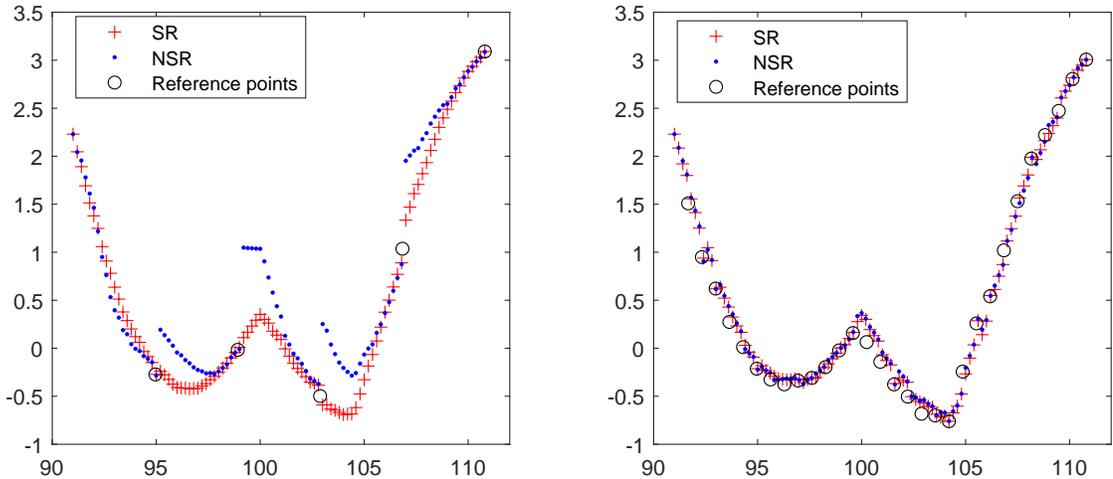}
\caption{Non-parametric sample recycling estimation of portfolio loss in Example \ref{ex_bar}
}\label{no_bar}
\end{figure}
\begin{figure}[h]
\centering
 \includegraphics[width=1\textwidth]{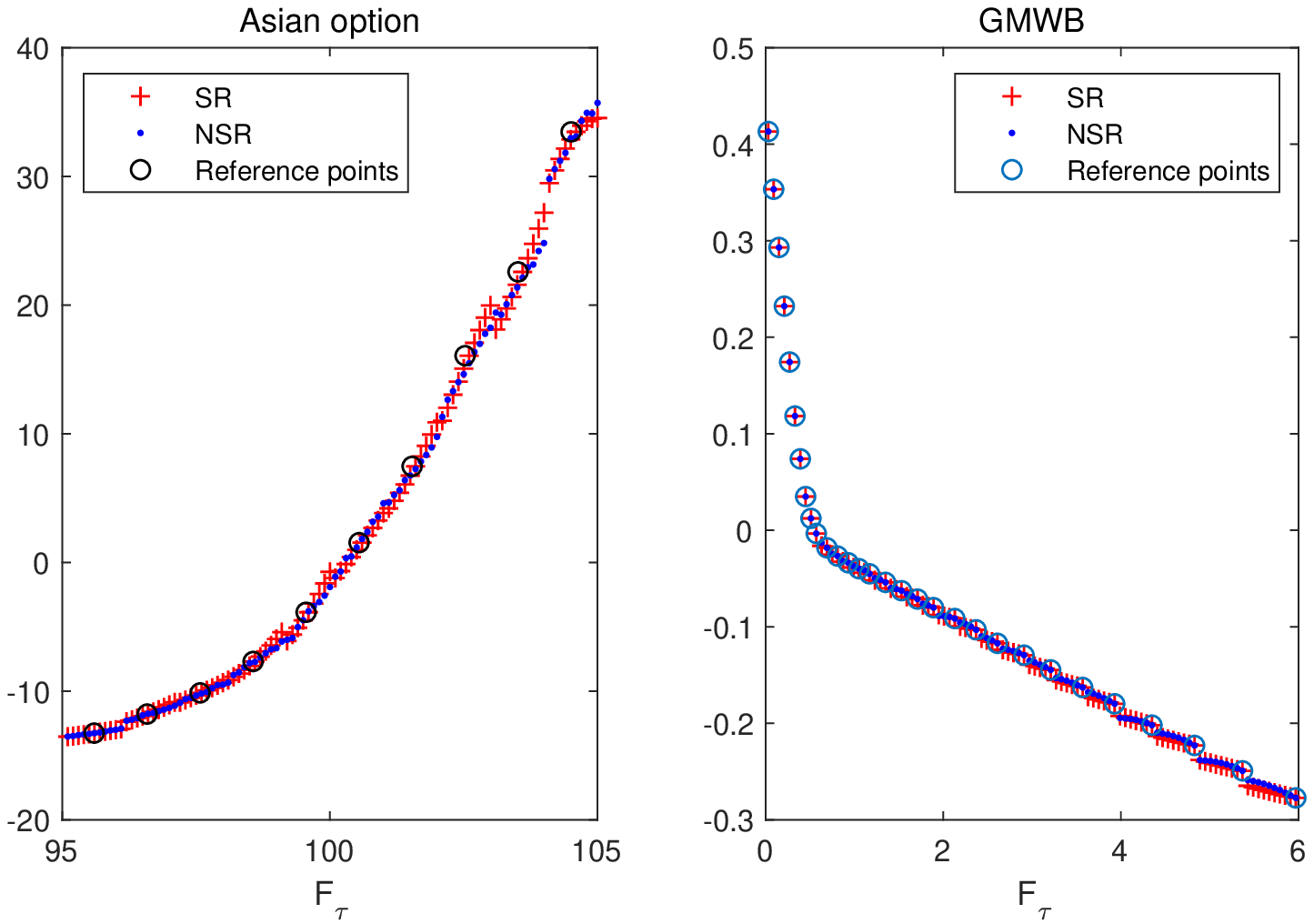}
\caption{Non-parametric sample recycling estimations of portfolio loss in Examples \ref{ex_asian} and \ref{ex_gmwb}
}\label{no_asi}
\end{figure}

To test the accuracy and efficiency of this non-parametric method, we re-run inner loop estimations in Examples \ref{ex_bar}--\ref{ex_gmwb} and compare results from the non-parametric sample recycling (NSR) method with those by the sample recycling (SR) method in \eqref{inner1}. For Examples \ref{ex_bar}, we consider losses of the barrier option portfolio for $100$ equidistant points in the range of equity price $[91,110.8]$. For both SR and NSR methods, we choose the same set of reference points to estimate the corresponding $L_{\tau}$ of other points in each example. We always use $l=5$ intervals for counting observations to estimate the corresponding likelihood ratios in all examples.
Figure \ref{no_bar} compares estimations of $L_i$ by both methods for the barrier option portfolio. The right endpoint is chosen as the reference point in each interval. It is clear from the left panel of Figure \ref{no_bar} that the SR method leads to fairly accuracy results even with only $5$ reference points and that the NSR method produce results with larger estimation errors. However, as we increase the number of reference points to ten, we observe from the right panel of Figure \ref{no_bar} that results from the NSR method are much closer to those from the SR method and hence improve significantly. We can apply the same technique to the other two examples. Figure \ref{no_asi} shows the comparison of results by both methods for the Asian option portfolio and the GMWB liability. In the estimation of portfolio loss in the basket of Asian options, we use $10$ reference points to estimate losses on $100$ equidistant equity values over the range $[95.1,105]$  and midpoints as the reference points for all intervals. In the estimation of the GMWB liability, we use $30$ reference points to estimate the GMWB liability for $100$ equidistant equity values over the range $[0.03,6]$. However, for this example we take a different approach to choose reference points. The reference points are chosen by right points in intervals of length determined by a geometric series. We first set the first reference point $x_1=6,$ and the rest of reference points are given by $x_{k}=\inf\{x\in(x_1,x_2,\cdots,x_n)|x_{k-1}/x<1.1\}$. The right panel of Figure \ref{no_asi} shows estimated GMWB liabilities based on $100$ equidistant points of asset prices. The graph clearly shows that both methods produce very similar results.



 \begin{table}[h]
 \begin{center}
 \begin{tabular}{c|c|c}
 \hline
  Number of reference points& MSE & \  \ Time(Sec) \\
 \hline
$30$ &$2.9717\times10^{-5}$& $ 92.0662 $\\
 \hline
 \end{tabular}
 \caption{MSE and time of $\tilde{\rho}_{\text{NSR}}$ for Example \ref{ex_bar}.}
 \label{bar_non}
 \end{center}
 \end{table}
 \begin{table}[h]
 \begin{center}
 \begin{tabular}{c|c|c}
\hline
  Number of reference points& MSE & \  \ Time (secs) \\
 \hline
 $5\times5$ &$6.5915\times10^{-3}$& $ 283.6537 $\\
 \hline
\end{tabular}
 \caption{MSE and time of $\tilde{\rho}_{\text{NSR}}$ for Example \ref{ex_asian}.}
 \label{asian_non}
 \end{center}
 \end{table}
 \begin{table}[h]
 \begin{center}
\begin{tabular}{c|c|c}
 \hline
$\text{VaR}_{}$& Stand Dev & \  \ Time (secs) \\
 \hline
$0.83757\times 10^{-2} $ &$6.1827\times10^{-3}$& $  368.8039$\\
 \hline
 \end{tabular}
 \caption{Standard deviation and run-time of $\tilde{\rho}_{\text{NSR}}$ for Example \ref{ex_gmwb}.}
 \label{gmwb_non}
 \end{center}
 \end{table}
 
 To further illustrate the implement of the NSR method, we extend these numerical examples further to show the computation of risk measures $\rho$ by the non-parametric estimator $\tilde{\rho}_{\text{NSR}}$. 
Comparing Table \ref{bar_non} with Table \ref{barrier_scr}, we observe results by both the non-parametric estimator $\tilde{\rho}_{\text{NSR}}$ and the original estimator $\tilde{\rho}_{\text{SR}}$. We use $30$ reference points in the inner estimation of non-parametric method to guarantee the accuracy. Table \ref{bar_non} indicates that it takes more time than sample recycling method because of the increased number of reference points. Nonetheless, the NSR method still outperforms the standard nested Monte Carlo.
Table \ref{asian_non} is the analogue of Table \ref{asian_1} for the non-parametric method. We use $5$ reference points to estimate each Asian option and hence the total number of reference points is $5 \times 5$. 
Table \ref{gmwb_non} corresponds to Table \ref{gmwb} with the non-parametric method. 
Both of these examples show that the non-parametric method has higher efficiency and accuracy than standard Monte Carlo.


As shown in previous numerical examples, the non-parametric sample recycling method is easy to implement. While it does not achieve the same level of accuracy as the original sample recycling method given a fixed set of reference points, one may have to resort to the non-parametric approach as the underlying model is unknown. The examples provide evidence to show that the non-parametric approach is a viable alternative whose accuracy improves with the size of reference points.

\section{Conclusion}

Most of existing techniques to reduce run-time for nested simulation are based on the replacement of inner loop simulations with curve fitting. The essence of these techniques is to develop a functional relationship between risk factors (equity values, interest rates, etc) and target features (insurance liability, Greek values) of inner loop calculations. Such a functional relationship can be approximated by multivariate interpolation or smoothing techniques such as least squares Monte Carlo. Nonetheless, these techniques often require a large size of economic scenarios to develop accurate enough functional relationships, which could also be costly to begin with. This paper proposes a new approach based on an entirely different strategy, which is to avoid approximate functional relationship and instead to save time by reducing repeated re-sampling of economic scenarios. The technique is to generate sample of risk factors under a small set of probability measures and recycle them by twisting likelihood ratios under other probability measures. The advantage of this approach is to reduce the number of sample generation for risk factors and subsequent inner loop evaluations. The disadvantage of such an approach is that the reduction of computational burden is achieved at the expense of increased sampling errors. This method is particularly suitable for long term products that require heavy computation for inner loop evaluation. 

While we have shown analytical solutions to distorted weights for various parametric models, we also consider the application of non-parametric sample recycling method to settings where the underlying model is either unknown or too complicated. The non-parametric is shown to be able to reproduce results, free of any information about the underlying model. It is less accurate than the sample recycling method but can be improved with an increased number of reference points. We only present a naive version of non-parametric likelihood ratio estimation as a proof-of-concept. However, there is a rich body of literature on machine learning techniques that can be used to estimate density ratios. Future work is needed to improve the naive method with more sophisticated machine learning for better accuracy and efficiency. 

\appendix

\section{Calculations} \label{sec:app}

\begin{example}\label{exam_variance}
In \eqref{riskmeas} and \eqref{inner}, we assume that $ X$ is given by a uniform random variable on $[-1,1]$ and $Z$ is a standard normal random variable. Consider expected value of loss where the loss is determined by $L=\mathbb{E}[g(Z)|X]=\mathbb{E}[\sqrt{2/\pi}\exp (-2(Z-X))|X].$ It follows from Proposition \ref{prop_variance} that
$$\text{Var}(\widehat{\rho}_{\mathrm{SN}})=\frac{1}{nm}\left(\frac{1}{\sqrt{\pi}}\left(\Phi(\sqrt{8/9})-0.5\right)-(\Phi(2/\sqrt{5})-0.5)^2\right)$$
and
\be
\text{Var}(\widetilde{\rho}_{\mathrm{SR}})= \frac{1}{mn^2}[B_2+(n-1)A_2+2(n-1)C+(n^2-3n+2)D-(B_1+(n-1)A_1)^2],\nb
\ee
where
\be
&&A_l=\int_{-1}^{1}\int_{-1}^{1}\left(\sqrt{\frac{2}{\pi}}\right)^{l}\frac{1}{4\sqrt{4l+1}} \exp\left(-\frac{1}{8l+2}\left((3l^2+l)x_i^2+(3l-13l^2)x_1^2+10l^2x_1x_i\right) \right)  dx_1 dx_i \nb \\
&&B_l=\left(\sqrt{\frac{2}{\pi}}\right)^{l}\sqrt{\frac{\pi}{2l}}\left[\Phi(\sqrt{4l/4l+1})-0.5\right], l=1,2\  \nb \\
&&C=\int_{-1}^{1}\int_{-1}^{1}\frac{1}{6\pi} \exp\left(-\frac{4}{9}x_i^2+x_1^2-x_1x_i \right)
 dz dx_1 dx_i.\nb\\
&&D=\int_{-1}^{1} \int_{-1}^{1}\int_{-1}^{1}\frac{1}{12\pi}\exp\left(-\frac{4}{9}(x_i^2+x_j^2)-x_1(x_i+x_j)+\frac{3}{2}x_1^2+\frac{x_ix_j}{9} \right)
 dx_1 dx_i dx_j.\nb
\ee
\end{example}
Consider an independent and identically distributed sample of $X$ denoted by $(X_1,X_2,...X_n)$. For each given $X_i$, we have $(Z-X_i)$ follows a normal distribution with mean $-X_i$ and variance $1$. We can therefore determine the coefficients.
$$p_{i|1}(Z)=\frac{\phi(Z+X_i)}{\phi(Z+X_1)}.$$
The follows gives the calculations of $A_l, B_l, C, D$.
\be
B_l&=&\int_{-1}^{1}\frac12\int_{-\infty}^{+\infty}\left(\sqrt{\frac{2}{\pi}}\right)^{l}\frac{1}{\sqrt{2\pi}}\exp\left(-2l(x-z)^2-\frac{z^2}{2}\right)dz dx\nb \\
 &=&\int_{-1}^{1}\frac12\int_{-\infty}^{+\infty}\left(\sqrt{\frac{2}{\pi}}\right)^{l}\frac{1}{\sqrt{2\pi}}\exp\left(-\frac{1}{2}\left(\sqrt{4l+1}z-\frac{4l}{\sqrt{4l+1}}x\right)^2-\frac{2l}{4l+1}x^2\right)dz dx \nb \\
 &=&\int_{-1}^{1}\left(\sqrt{\frac{2}{\pi}}\right)^{l}\frac{1}{2\sqrt{4l+1}}\exp\left(-\frac{2l}{4l+1}x^2\right)dx\nb \nb \\
 &=& \left(\sqrt{\frac{2}{\pi}}\right)^{l} \sqrt{\frac{\pi}{2l}} \int_{-\sqrt{4l/4l+1}}^{\sqrt{4l/4l+1}}\frac{1}{\sqrt{2\pi}}\exp\left(-\frac{x^2}{2}\right)dx\nb \\
 &=& \frac12\left(\sqrt{\frac{2}{\pi}}\right)^{l}\sqrt{\frac{\pi}{2l}} \left[\Phi(\sqrt{4l/4l+1})-\Phi(-\sqrt{4l/4l+1})\right] \nb \\
 &=&\frac12\left(\sqrt{\frac{2}{\pi}}\right)^{l}\sqrt{\frac{\pi}{2l}}\left[2\Phi(\sqrt{4l/4l+1})-1\right],\nb
\ee
where $$B_1=\Phi(2/\sqrt{5})-0.5\ \  \text{and} \ \ B_2=\frac{1}{\sqrt{\pi}}\left(\Phi(\sqrt{8/9})-0.5\right).$$
\be
A_l&=&\int_{-1}^{1}\frac{1}{2}\int_{-1}^{1}\frac{1}{2}\int_{-\infty}^{+\infty} \left(\sqrt{\frac{2}{\pi}}\right)^{l}\frac{1}{\sqrt{2\pi}} \exp\left(-\frac{1}{2}\left(l(z+x_i)^2-l(z+x_1)^2+z^2\right)-2l(z-x_1)^2\right) dz dx_1 dx_i \nb\\
&=& \int_{-1}^{1}\frac{1}{2}\int_{-1}^{1}\frac{1}{2}\int_{-\infty}^{+\infty} \left(\sqrt{\frac{2}{\pi}}\right)^{l}\frac{1}{\sqrt{2\pi}} \exp\left(-\frac{1}{2}\left(\sqrt{4l+1}z+\frac{n}{\sqrt{4l+1}}x_i-\frac{5l}{\sqrt{4l+1}}x_1\right)^2\cdot\right.\nb \\
 &&\left.\exp\left(-\frac{1}{8l+2}\left((3l^2+l)x_i^2+(3l-13l^2)x_1^2+5l^2x_1x_i\right) \right) \right)
 dz dx_1 dx_i \nb \\
&=& \int_{-1}^{1}\frac{1}{2}\int_{-1}^{1}\frac{1}{2}\left(\sqrt{\frac{2}{\pi}}\right)^{l}\frac{1}{\sqrt{4l+1}} \exp\left(-\frac{1}{8l+2}\left((3l^2+l)x_i^2+(3l-13l^2)x_1^2+10l^2x_1x_i\right) \right)  dx_1 dx_i \nb
\ee
where
\be
A_1= \int_{-1}^{1}\int_{-1}^{1}\frac{1}{4}\sqrt{\frac{2}{5\pi} } \exp\left(-\frac{2}{5}x_i^2+x_1^2-x_1x_i\right) dx_1 dx_i\nb
\ee
\be
 A_2= \int_{-1}^{1}\int_{-1}^{1}\frac{1}{6\pi} \exp\left(-\frac{1}{9}(7x_i^2-23x_1^2+20x_1x_i)\right) dx_1 dx_i.\nb
\ee

\be
C&=&\int_{-1}^{1}\frac{1}{2}\int_{-1}^{1}\frac{1}{2}\int_{-\infty}^{+\infty} \frac{2}{\pi}\frac{1}{\sqrt{2\pi}} \exp\left(-\frac{1}{2}\left((z+x_i)^2-(z+x_1)^2+z^2\right)-4(z-x_1)^2\right) dz dx_1 dx_i \nb\\
\mathbb{}&=& \int_{-1}^{1}\frac{1}{2}\int_{-1}^{1}\frac{1}{2}\int_{-\infty}^{+\infty} \frac{2}{\pi}\frac{1}{\sqrt{2\pi}} \exp\left(-\frac{1}{2}\left(3z+\frac{1}{3}x_i-3x_1\right)^2\exp\left(-\frac{4}{9}x_i^2+x_1^2-x_1x_i \right) \right)
 dz dx_1 dx_i \nb \\
 &=& \int_{-1}^{1}\frac{1}{2}\int_{-1}^{1}\frac{1}{3\pi} \exp\left(-\frac{4}{9}x_i^2+x_1^2-x_1x_i \right)
 dz dx_1 dx_i.\nb
\ee
\be
D&=&\int_{-1}^{1}\int_{-1}^{1}\int_{-1}^{1}\frac{1}{8}\int_{-\infty}^{+\infty} \frac{2}{\pi}\frac{1}{\sqrt{2\pi}} \exp\left(-\frac{1}{2}\left((z+x_i)^2+(z+x_j)^2-2(z+x_1)^2+z^2\right)-4(z-x_1)^2\right) dz dx_1 dx_idx_j \nb\\
 &=& \int_{-1}^{1} \int_{-1}^{1}\int_{-1}^{1}\frac{1}{8}\int_{-\infty}^{+\infty} \frac{2}{\pi}\frac{1}{\sqrt{2\pi}}  \exp\left(-\frac{1}{2}\left(3z+\frac{1}{3}(x_i+x_j)-3x_1\right)^2\right) \nb \\
 &&\exp\left(-\frac{4}{9}(x_i^2+x_j^2)-x_1(x_i+x_j)+\frac{3}{2}x_1^2+\frac{x_ix_j}{9} \right)
 dz dx_1 dx_i dx_j.\nb\\
 &=&\int_{-1}^{1} \int_{-1}^{1}\int_{-1}^{1}\frac{1}{12\pi}\exp\left(-\frac{4}{9}(x_i^2+x_j^2)-x_1(x_i+x_j)+\frac{3}{2}x_1^2+\frac{x_ix_j}{9} \right)
 dx_1 dx_i dx_j.\nb
\ee

{\bf Acknowledgments}

This work was supported by the Natural Science Foundation of Jiangsu Province [BK20200833]; the MOE Project of Humanities and Social Sciences [19YJCZH083]; National Natural Science Foundation of China [12001267].

\newpage
\bibliography{testbible}

\begin{thebibliography}{}

\bibitem[Bauer et~al., 2012]{bauer2012on}
Bauer, D., Reuss, A., and Singer, D. (2012).
\newblock On the calculation of the solvency capital requirement based on
  nested simulations.
\newblock {\em Astin Bulletin}, 42(02):453--499.

\bibitem[Becker, 2010]{Martin2010}
Becker, M. (2010).
\newblock Exact simulation of final, minimal and maximal values of brownian
  motion and jump-diffusions with applications to option pricing.
\newblock {\em Computational Management Science}, 7(1):1--17.

\bibitem[Beylkin and Monz\'{o}n, 2005]{beylkin2005on}
Beylkin, G. and Monz\'{o}n, L. (2005).
\newblock On approximation of functions by exponential sums.
\newblock {\em Applied and Computational Harmonic Analysis}, 19(1):17--48.

\bibitem[Broadie et~al., 2011]{broadie2011efficient}
Broadie, M., Du, Y., and Moallemi, C.~C. (2011).
\newblock Efficient risk estimation via nested sequential simulation.
\newblock {\em Management Science}, 57(6):1172--1194.

\bibitem[Broadie et~al., 2015]{broadie2015risk}
Broadie, M., Du, Y., and Moallemi, C.~C. (2015).
\newblock Risk estimation via regression.
\newblock {\em Operations Research}, 63(5):1077--1097.

\bibitem[Dang, 2021]{dang}
Dang, O. (2021).
\newblock {\em Efficient Nested Simulation of Tail Risk Measures for Variable
  Annuities}.
\newblock PhD thesis, University of Waterloo.

\bibitem[Feng and Staum, 2017]{fengming2017}
Feng, M. and Staum, J. (2017).
\newblock Green simulation: Reusing the output of repeated experiments.
\newblock {\em ACM Transactions on Modeling and Computer Simulation}, 27(4).

\bibitem[Feng et~al., 2016]{feng2016nested}
Feng, R., Cui, Z., and Li, P. (2016).
\newblock Nested stochastic modeling for insurance companies.
\newblock Technical report, Society of Actuaries.

\bibitem[Feng and Jing, 2017]{feng2017analytical}
Feng, R. and Jing, X. (2017).
\newblock Analytical valuation and hedging of variable annuity guaranteed
  lifetime withdrawal benefits.
\newblock {\em Insurance: Mathematics and Economics}, 72:36--48.

\bibitem[Gan, 2013]{gan2013application}
Gan, G. (2013).
\newblock Application of data clustering and machine learning in variable
  annuity valuation.
\newblock {\em Insurance: Mathematics and Economics}, 53(3):795--801.

\bibitem[Gan and Lin, 2015]{gan2015valuation}
Gan, G. and Lin, X.~S. (2015).
\newblock Valuation of large variable annuity portfolios under nested
  simulation: A functional data approach.
\newblock {\em Insurance: Mathematics and Economics}, 62:138--150.

\bibitem[Gan and Lin, 2017]{gan2017efficent}
Gan, G. and Lin, X.~S. (2017).
\newblock Efficient greek calculation of variable annuity portfolios for
  dynamic hedging: A two-level metamodeling approach.
\newblock {\em North American Actuarial Journal}, 21:161--177.

\bibitem[Giles and Haji-Ali, 2019]{Giles2019}
Giles, M.~B. and Haji-Ali, A.-L. (2019).
\newblock Multilevel nested simulation for efficient risk estimation.
\newblock {\em ArXiv Preprint ArXiv:1802.05016}, 7(2):497–525.

\bibitem[Glasserman, 2003]{glasserman2003monte}
Glasserman, P. (2003).
\newblock {\em Monte Carlo methods in financial engineering}.
\newblock Springer.

\bibitem[Gordy and Juneja, 2010]{gordy2010nested}
Gordy, M.~B. and Juneja, S. (2010).
\newblock Nested simulation in portfolio risk measurement.
\newblock {\em Management Science}, 56(10):1833--1848.

\bibitem[Hardy, 2003]{hardy2003investment}
Hardy, M. (2003).
\newblock {\em Investment guarantees: modeling and risk management for
  equity-linked life insurance}.
\newblock John Wiley \& Sons.

\bibitem[Hardy, 2001]{Hardy2001}
Hardy, M.~R. (2001).
\newblock A regime-switching model of long-term stock returns.
\newblock {\em North American Actuarial Journal}, 5(2):41--53.

\bibitem[Haug, 2007]{Haug2007}
Haug, E.~G. (2007).
\newblock {\em The complete guide to option pricing formulas}.
\newblock McGraw-Hill.

\bibitem[Hejazi and Jackson, 2017]{Hejazi2017}
Hejazi, S.~A. and Jackson, K.~R. (2017).
\newblock Efficient valuation of scr via a neural network approach.
\newblock {\em Journal of Computational and Applied Mathematics},
  313:427–439.

\bibitem[Hesterberg, 1995]{Hesterberg1995}
Hesterberg, T. (1995).
\newblock Weighted average importance sampling and defensive mixture
  distributions.
\newblock {\em Technometrics}, 37(2):185--194.

\bibitem[Hong et~al., 2017]{Hong2017}
Hong, L.~J., Juneja, S., and Liu, G. (2017).
\newblock Kernel smoothing for nested estimation with application to portfolio
  risk measurement.
\newblock {\em Operations Research}, 65(3):657--673.

\bibitem[Huang et~al., 2007]{Huang2007}
Huang, J., Gretton, A., Borgwardt, K.~M., Scholkopf, B., and Smola, A.~J.
  (2007).
\newblock Correcting sample selection bias by unlabeled data.
\newblock {\em Advances in Neural Information Processing Systems}, pages
  601--608.

\bibitem[Kanamori et~al., 2009]{Kanamori2009}
Kanamori, T., Hido, S., and Sugiyama, M. (2009).
\newblock A least-squares approach to direct importance estimation.
\newblock {\em Journal of Machine Learning Research}, 10:1391–1445.

\bibitem[Lan et~al., 2010]{lan2010confidence}
Lan, H., Nelson, B.~L., and Staum, J. (2010).
\newblock A confidence interval procedure for expected shortfall risk
  measurement via two-level simulation.
\newblock {\em Operations Research}, 58(5):1481--1490.

\bibitem[L\'{e}vy, 1992]{Levy1992}
L\'{e}vy, E. (1992).
\newblock Pricing european average rate currency options.
\newblock {\em Journal of International Money \& Finance}, 11(5):474--491.

\bibitem[Li and Feng, 2021]{Lifen}
Li, P. and Feng, R. (2021).
\newblock Nested monte carlo simulation in financial reporting: a review and a
  new hybrid approach.
\newblock {\em Scandinavian Actuarial Journal}.

\bibitem[Lin and Yang, 2020a]{lin_yang_2020}
Lin, X.~S. and Yang, S. (2020a).
\newblock Efficient dynamic hedging for large variable annuity portfolios with
  multiple underlying assets.
\newblock {\em ASTIN Bulletin}, 50(3):913–957.

\bibitem[Lin and Yang, 2020b]{lin_yang}
Lin, X.~S. and Yang, S. (2020b).
\newblock Fast and efficient nested simulation for large variable annuity
  portfolios: a surrogate modeling approach.
\newblock {\em Insurance: Mathematics and Economics.}, 91:85--103.

\bibitem[Morgan and Slutzky, 2006]{Morgan2006prep}
Morgan, E. and Slutzky, M. (2006).
\newblock Preparing for solvency {II}-theoretical and practical issues in
  building internal economic capital models using nested stochastic
  projections.
\newblock Technical report, Milliman Consultants and Actuaries.

\bibitem[Rainforth et~al., 2018]{Rainforth2018a}
Rainforth, T., Cornish, R., Yang, H., Warrington, A., and Wood, F. (2018).
\newblock On nesting monte carlo estimators.
\newblock {\em Proceedings of the 35th International Conference on Machine
  Learning, Stockholm, Sweden}.

\bibitem[Reynolds and Man, 2008a]{Reynolds2008b}
Reynolds, C. and Man, S. (2008a).
\newblock Nested stochastic pricing: A case study.
\newblock {\em Product Matters}, 72.

\bibitem[Reynolds and Man, 2008b]{reynolds2008nested}
Reynolds, C. and Man, S. (2008b).
\newblock Nested stochastic pricing: The time has come.
\newblock {\em Product Matters. Society of Actuaries}, 6(71):16--20.

\bibitem[Sugiyama et~al., 2012]{Sugiyama2012}
Sugiyama, M., Suzuki, T., and Kanamori, T. (2012).
\newblock {\em {D}ensity {R}atio {E}stimation in {M}achine {L}earning}.
\newblock Cambridge University Press, 1st edition.

\bibitem[Vasicek, 1977]{Vasicek1977}
Vasicek, O. (1977).
\newblock An equilibrium characterization of the term structure.
\newblock {\em Journal of Financial Economics}, 5:177--188.

\bibitem[Øivind Skare et~al., 2003]{Skare2003}
Øivind Skare, Bølviken, E., and Holden, L. (2003).
\newblock Improved sampling importance resampling and reduced bias importance
  sampling.
\newblock {\em Scandinavian Journal of Statistics}, 30(4):719--737.

\end{thebibliography}
\end{document}